\def\Hcal{{\mathcal{H}}}
\def\Ncal{{\mathcal{N}}}
\def\Scal{{\mathcal{S}}}
\def\Ycal{{\mathcal{Y}}}
\def\Ibf{{\mathbf{I}}}
\def\Ebb{{\mathbb{E}}}
\def\Rbb{{\mathbb{R}}}
\def\Sbb{{\mathbb{S}}}
\newtheorem{definition}{Definition}
\newtheorem{theorem}{Theorem}
\newcommand\independentt{\protect\mathpalette{\protect\independenT}{\perp}}
\def\independenT#1#2{\mathrel{\rlap{$#1#2$}\mkern2mu{#1#2}}}
\def\Expect{\mathrm{E}}
\def\iter{\mathrm{iter}}
\def\calF{\mathcal{F}}
\def\calY{\mathcal{Y}}
\def\bC{{\bf C}}
\def\bQ{{\bf Q}}
\def\bU{{\bf U}}
\def\bV{{\bf V}}
\def\bZ{{\bf Z}}
\def\bbeta{{\boldsymbol{\beta}}}
\def\bC{{\bf C}}
\def\bX{{\bf X}}
\def\bK{{\bf K}}
\def\bM{{\bf M}}
\def\bY{{\bf Y}}
\def\bZ{{\bf Z}}
\def\calH{{\mathcal H}}
\def\calS{{\mathcal S}}
\newcommand{\indep}{\perp \!\!\! \perp}
\newcommand{\blind}{0}
\begin{document}

\def\spacingset#1{\renewcommand{\baselinestretch}%
{#1}\small\normalsize} \spacingset{1}


\if0\blind
{
  \title{\bf Fr\'echet Sufficient Dimension Reduction for Metric Space-Valued Data via Distance Covariance}
  \author{Hsin-Hsiung Huang\\
    Department of Statistics and Data Science, University of Central Florida, Orlando, Florida\\
    and \\
    Feng Yu \\
    Department of Mathematical Science, University of Texas at El Paso, El Paso, Texas \\
    and \\
    Kang Li and Teng Zhang\thanks{
    CONTACT: Teng Zhang teng.zhang@ucf.edu Department of Mathematics, University of Central Florida, Orlando, Florida 32816, U.S.A.}\hspace{.2cm}\\
 Department of Mathematics, University of Central Florida, Orlando, Florida}
  \maketitle
} \fi

\if1\blind
{
  \bigskip
  \bigskip
  \bigskip
  \begin{center}
    {\LARGE\bf  Fr\'echet Sufficient Dimension Reduction for Metric Space-Valued Data via Distance Covariance}
\end{center}
  \medskip
} \fi

\bigskip
\begin{abstract}

We propose a novel Fr\'echet sufficient dimension reduction (SDR) method based on kernel distance covariance, tailored for metric space-valued responses such as count data, probability densities, and other complex structures.
The method leverages a kernel-based transformation to map metric space-valued responses into a feature space, enabling efficient dimension reduction. By incorporating kernel distance covariance, the proposed approach offers enhanced flexibility and adaptability for datasets with diverse and non-Euclidean characteristics. The effectiveness of the method is demonstrated through synthetic simulations and several real-world applications. In all cases, the proposed method runs faster and consistently outperforms the existing Fr\'echet SDR approaches, demonstrating its broad applicability and robustness in addressing complex data challenges.
\end{abstract}

\noindent%
{\it Keywords:}   bike rental, breast cancer survival, carcinoma gene expression, data visualization, global mortality rate, kernel distance covariance
\vfill

\newpage
\spacingset{1.45} %

\section{Introduction}

In contemporary data analysis and visualization, the emergence of metric-space data has become increasingly prevalent in applications involving complex, non-Euclidean data objects. These objects, often termed random objects, encompass diverse forms such as count responses, probability density functions (e.g., global mortality distributions and bike rental frequencies), carcinoma gene expression profiles, and breast cancer survival statistics. The inherent complexity and non-Euclidean geometry of such data challenge the foundational assumptions of conventional statistical methods, rendering them insufficient for precise and effective analysis. Addressing these challenges necessitates the development of innovative approaches that account for the unique characteristics of metric-space data.

The concept of the Fr\'echet mean was introduced in 1948 \citep{Frechet1948elements}, which extends the notion of the expectation of a random vector to random objects in a general metric space. Building on this foundational concept, Petersen and M\"uller proposed the Fr\'echet regression model \citep{petersen2019Frechet}, which generalizes the Fr\'echet mean to conditional Fr\'echet means, offering a versatile framework for regression analysis with responses residing in a metric space.

Despite these advancements, the application of Fr\'echet regression faces the significant challenge of the curse of dimensionality, particularly when dealing with high-dimensional predictor spaces. As the number of predictors increases, the efficiency and accuracy of regression models tend to diminish, necessitating dimension reduction techniques to preserve the essential information about the predictors and their relationships with the response variable. Sufficient Dimension Reduction (SDR) techniques \citep{huang2024robust,zhang2023dimension}, widely employed in classical regression contexts, offer a promising solution by projecting high-dimensional predictors onto a lower-dimensional subspace without losing essential information regarding the response. SDR, which is based on the conditional distribution of the response \citep{cook1991discussion,Li1991,xia2002adaptive,Yin11-AOS950}, achieves this by ensuring that the reduced predictors retain all the relevant regression information. Recently, Fr\'echet SDR methods \citep{ying2022frechet,zhang2023dimension,weng2023sparse} have been proposed to map metric-space-valued responses to real-valued random variables using a class of functions, followed by the application of classical SDR techniques to the transformed data. These Fr\'echet SDR methods operate within the classical framework, where the sample size $ n $ exceeds the dimension of the predictors $ p $. However, these methods encounter significant challenges in high-dimensional settings where $ p $ is much larger than $ n $, particularly in relation to matrix inversion and fitting inverse regression models. First, these methods require the inversion of a $ p \times p $ covariance matrix, which becomes problematic when $ p > n $ \citep{li2008sliced}. Second, interpreting the extracted components proves difficult due to the nonlinear relationships between features and responses \citep{tan2018convex,ying2022frechet}.

Recent developments in SDR methods using distance covariance (dCov) \citep{sheng2013direction,sheng2016sufficient,yang2024functional} have shown that these techniques can overcome the limitations of requiring constant covariance conditions or specific distributional assumptions for $X$, $X \mid Y$, or $Y \mid X$, making them broadly applicable to continuous and discrete variables across diverse distributions. Moreover, several robust SDR methods have been proposed to enhance coefficient estimation, including robust SDR using ball covariance \citep{zhang2019robust} and expected likelihood-based methods that minimize the Kullback-Leibler distance \citep{yin2005direction,zhang2015direction}.

Comparing with the functional SDR method proposed by \citep{yang2024functional}, our approach introduces significant advancements by expanding the applicability of sufficient dimension reduction to metric-space-valued responses, providing enhanced flexibility beyond purely scalar outcomes. Specifically, our method employs kernel-based feature mapping to accommodate responses in non-Euclidean spaces, utilizing advanced kernel properties such as the Central Mean Subspace (CMS) ensembles to establish robust theoretical guarantees. Furthermore, our approach effectively addresses the curse of dimensionality through a screening method \citep{pan2019generic} that avoids the computational burden of covariance inversion, instead optimizing over Stiefel manifolds. This innovation ensures suitability for ultra-high-dimensional scenarios. While the Fr\'echet SDR methods \citep{zhang2023dimension, weng2023sparse} are primarily tailored for functional data analysis, our method extends its applicability to a broader spectrum of response types, including multivariate and distributional data, thereby offering a more versatile framework.

In this paper, we present a novel sufficient dimension reduction (SDR) approach within the framework of Fr\'echet regression. Our method, Fd-SDR, leverages distance covariance and adopts a non-parametric strategy by mapping random objects in a metric space to real-valued random variables through a carefully selected class of functions. Compared to existing Fr\'echet SDR methods \citep{zhang2023dimension, weng2023sparse}, Fd-SDR demonstrates superior performance under general settings and is computationally faster. We detail our proposed method in Section~\ref{sec:meth}, providing a comprehensive theoretical analysis in Section~\ref{sec:theory}, including justification on kernel as well as the convergence and consistency properties. Numerical experiments validating the effectiveness of Fd-SDR on both synthetic and real datasets are presented in Sections~\ref{sec:syn} and \ref{sec:real_datasets}, respectively.

\section{Methodology}\label{sec:meth}
This section presents our methodology of robust regularized Fr\'echet regression for sufficient dimension reduction. Section~\ref{sec:robustsdr_alpha} reviews the $\alpha$-dCov SDR model, and in Section~\ref{sec:gen_fsdr}, we propose our method of Fr\'echet SDR model via kernel distance covariance (KdCov).

\subsection{Sufficient Dimension Reduction via dCov}\label{sec:robustsdr_alpha}
This section reviews the classic SDR approach using the $\alpha$-dCov model, in which the responses are either scalars or vectors. The $\alpha$-dCov-based SDR framework generalizes the dCov method by improving robustness against data corruption caused by outliers \citep{huang2024robust}. Notably, the $\alpha$-dCov-based framework is reduced to the classical dCov-based SDR \citep{sheng2013direction,sheng2016sufficient} when $\alpha = 1$, and the term ``dCov-based SDR" specifically refers to this case throughout this paper.

Let $\{(X_i, Y_i)\}_{i=1}^n$ be $n$ i.i.d. samples drawn from the random variables $(X, Y)$. We define the data matrix $\bX = [X_1,\ldots,X_n] \in \Rbb^{p \times n}$, where the columns are $X_1, \ldots, X_n$, and let $\bY = [Y_1, \ldots, Y_n] \in \Rbb^n$ represent the vector of scalar responses. Let $\bm{\beta}\in\Rbb^{p\times d}$ be the coefficient matrix and $\hat{\Sigma}_X\in\Rbb^{p\times p}$ be the covariance matrix of $\bX$, which is assumed to be nonsingular. The coefficient $\bm{\beta}$ of the $\alpha$-dCov SDR model is obtained via maximizing the $\alpha$-dCov of $\bm{\beta}^T\bX$ and $\bY$, $\nu^{2}(\bm{\beta}^T\bX,\bY;\alpha)$, and its empirical solution is given by solving the following optimization problem \citep{huang2024robust,sheng2016sufficient}:
\begin{align}\label{eq:empirical_alpha_dcov}
\max_{\bm{\beta}\in\mathbb{R}^{p\times d}}\nu^2_n(\bm{\beta}^T \bX,\bY;\alpha), \mbox{ s.t. }\bm{\beta}^T\hat{\Sigma}_X\bm{\beta}=\Ibf_d,
\end{align}
where $\nu^2_n(\cdot,\cdot;\alpha)$ is the empirical $\alpha$-dCov with $0<\alpha<2$~\citep{10.1214/009053607000000505}. Here, the $\alpha$-dCov is defined as follows: 
\begin{align}\label{eq:nu}
\nu^{2}(X,Y;\alpha )&:=\Ebb [\|X-X'\|^{\alpha }\,\|Y-Y'\|^{\alpha }]+\Ebb [\|X-X'\|^{\alpha }]\,\Ebb [\|Y-Y'\|^{\alpha }]\nonumber\\
&-2\Ebb [\|X-X'\|^{\alpha }\,\|Y-Y''\|^{\alpha }].
\end{align}
and the empirical $\alpha$-dCov is defined as follows.
We define the distance matrices $(a_{kl}) = (\|X_k - X_l \|^{\alpha}_2), (b_{kl}) = (|Y_k -Y_l |^{\alpha})$ for $0<\alpha <2$ and $\forall k,l=1,\ldots,n$. The matrix $A\in\Rbb^{n\times n}$ is defined by 
\begin{align*}
    A_{kl} = a_{kl} - \bar{a}_{k\cdot} - \bar{a}_{\cdot l} + \bar{a}_{\cdot\cdot},\quad \forall k,l=1,\ldots,n,
\end{align*}
where the row and column means, along with the overall mean, are given by 
\begin{align*}
    \bar{a}_{k\cdot} = \frac{1}{n}\sum^n_{l=1}a_{kl},\quad \bar{a}_{\cdot l},= \frac{1}{n}\sum^n_{k=1}a_{kl},\quad \bar{a}_{\cdot\cdot} = \frac{1}{n^2}\sum^n_{k,l=1}a_{kl}.
\end{align*}
Similarly, the matrix $B\in\Rbb^{n\times n}$ is defined analogously 
\begin{align}\label{eq:b_dd}
    B_{kl} = b_{kl} - \bar{b}_{k\cdot} - \bar{b}_{\cdot l} + \bar{b}_{\cdot\cdot},\forall k,l=1,\ldots,n.
\end{align}
Using these matrices, the empirical $\alpha$-dCov, $\nu_n^2(\bX,\bY)$, is defined by~\citep[Definition 4]{10.1214/009053607000000505}
\begin{align}\label{eq:alt_dcov}
    \nu_n^2(\bX,\bY;\alpha) = \frac{1}{n^2}A_{kl}B_{kl}.
\end{align}
A computational advantage of the $\alpha$-dCov approach is that the problem in \eqref{eq:empirical_alpha_dcov} can be solved efficiently. Indeed, there is an alternative formulation of $\nu_n^2(\bX,\bY)$~\citep[Appendix]{10.1214/009053607000000505} as follows:
\begin{align*}
    \nu_n^2(\bX,\bY;\alpha) = S_1+S_2-2S_3,
\end{align*}
where 
\begin{align*}
    S_1 & = \frac{1}{n^2}\sum_{k,l=1}^na_{kl}b_{kl}, \\
    S_2 & = \frac{1}{n^2}\sum_{k,l=1}^na_{kl} \frac{1}{n^2}\sum_{k,l=1}^nb_{kl}=\frac{1}{n^2}\sum_{k,l=1}^na_{kl}\bar{b}_{\cdot\cdot}, \\
    S_3 & = \frac{1}{n^3}\sum_{k=1}^n\sum_{l,m=1}^na_{kl}b_{km} = \frac{1}{n^2}\sum_{k,l=1}^na_{kl}\bar{b}_{k\cdot}.
\end{align*}
Notice that $\frac{1}{n^2}\sum_{k,l=1}^na_{kl}\bar{b}_{k\cdot}=\frac{1}{n^2}\sum_{k,l=1}^na_{kl}\bar{b}_{\cdot l}$ because for any $k,l\in[n]$, $a_{kl}\bar{b}_{k\cdot}=a_{lk}\bar{b}_{\cdot k}$. So we have that 
\begin{align}\label{eq:alt_emp_dcov}
    \nu_n^2(\bX,\bY;\alpha) = S_1+S_2-2S_3 = \frac{1}{n^2}\sum_{k,l=1}^na_{kl}(b_{kl}+\bar{b}_{\cdot\cdot}-\bar{b}_{k\cdot}-\bar{b}_{\cdot l}) = \frac{1}{n^2}a_{kl}B_{kl}
\end{align}

Consider the transformation $\bC = \hat{\Sigma}^{\frac12}_X \bm{\beta}$ and $\bZ = \hat{\Sigma}^{-\frac12}_X \bX$. With the alternative formulation of the empirical dCov in \eqref{eq:alt_emp_dcov}, we can reformulate the problem in \eqref{eq:empirical_alpha_dcov} as the following optimization problem:
\begin{align}\label{eq:alpha_dcov}
    \max_{\bC} \nu^2_n (\bC^T \bZ, \bY,\alpha) := \frac{1}{n^2}\sum^n_{k,l=1}a_{kl}(\bC;\alpha)B_{kl},\mbox{ s.t. }\bC \in \mbox{St}(d, p),
\end{align}
where $a_{kl}(\bC;\alpha)=\|\bC^T Z_{k}-\bC^T Z_{l}\|^\alpha_2$ and $B_{kl}$ is defined in \eqref{eq:b_dd}. Here the constraint $\mbox{St}(d, p) = \{\bC \in \mathbb{R}^{p\times d}\mid \bC^T\bC = I_d\}$ with $d \leq p$ denotes the Stiefel manifold.

\subsection{Fr\'echet Sufficient Dimension Reduction via KdCov}\label{sec:gen_fsdr}

The classic dCov-based SDR model is primarily designed for vector-valued responses. To accommodate metric space-valued responses, we replace the standard distance covariance with KdCov, as defined in Definition~\ref{def:kernel_dCov}, and provided its empirical version in Definition~\ref{def:empirical_kernel_dCov}. This concept, originally introduced by \citep{sejdinovic2013equivalence}, differs from the standard distance covariance given in \eqref{eq:nu}. Notably, the KdCov in \citep{sejdinovic2013equivalence} is formulated for both $X$ and $Y$ defining in metric spaces. In this paper, we focus on predictors $X$ in Euclidean space, making the KdCov defined in Definition~\ref{def:kernel_dCov} a specialized case of the general framework proposed in \citep{sejdinovic2013equivalence}.

\begin{definition}[KdCov~\citep{sejdinovic2013equivalence}]\label{def:kernel_dCov}
    Let $X\in\Rbb^p$ be a random vector and $Y\in\Ycal$ be a metric space-valued random variable, where $\Ycal$ is a metric space. 
    Let $(X', Y'),(X'',Y'')$ be two i.i.d. copies drawn from the joint distribution of $(X,Y)$. Assume there exists a reproducing kernel Hilbert space (RKHS) $\calH_Y$ with an associated feature mapping $\phi: \calY\rightarrow \calF$, where $\calF$ is a Hilbert space referred to as feature space. The kernel distance covariance (KdCov) is defined as follows:
    \begin{align}\label{eq:kernal_dCov}
    \nu^2_{\Hcal}(X,Y) &:= \mathbb{E}\left[\|X - X'\| d(Y,Y')\right] + \mathbb{E}\left[\|X - X'\|\right]\mathbb{E}\left[d(Y,Y')\right] - 2\mathbb{E}\left[\|X - X'\| d(Y,Y'')\right],
    \end{align}
    where $d(Y,Y')$ is the distance in the feature space and is given by
    \begin{align}\label{eq:rkhs1}
    d(Y,Y') &= \|\phi(Y) - \phi(Y')\| = \sqrt{\langle \phi(Y) - \phi(Y'), \phi(Y) - \phi(Y') \rangle} \nonumber\\
    &= \sqrt{\kappa(Y,Y) + \kappa(Y',Y') - 2\kappa(Y,Y')}.
    \end{align}
    Here $\kappa(\cdot,\cdot):\Ycal\times\Ycal\rightarrow\Rbb$ is the kernel function, and defined by $\kappa(Y,Y')=\langle\phi(Y),\phi(Y')\rangle$.
\end{definition}

\begin{definition}[empirical KdCov]\label{def:empirical_kernel_dCov}
For a random sample of $(\bX,\bY)=\{(X_i,Y_i)\}_{i=1}^n$ from the joint distribution of random vector $X\in\Rbb^{p}$ and random variable $Y\in\Ycal$ where $\Ycal$ is a metric space. Assume there exists a reproducing kernel Hilbert space (RKHS) $\calH_Y$ with an associated feature mapping $\phi: \calY\rightarrow \calF$. Define $a_{kl}=\|X_k-X_l\|_2,b_{kl}=\|\phi(Y_k)-\phi(Y_l)\|_2$ for $k,l\in[n]$ and the row and column means, along with the overall mean, of $a_{kl}$ and $b_{kl}$ are further defined by 
\begin{align}
    \bar{a}_{k\cdot} = \frac{1}{n}\sum^n_{l=1}a_{kl},\quad \bar{a}_{\cdot l},= \frac{1}{n}\sum^n_{k=1}a_{kl},\quad \bar{a}_{\cdot\cdot} = \frac{1}{n^2}\sum^n_{k,l=1}a_{kl} \label{eq:ekdcov_a}\\
    \bar{b}_{k\cdot} = \frac{1}{n}\sum^n_{l=1}b_{kl},\quad \bar{b}_{\cdot l},= \frac{1}{n}\sum^n_{k=1}b_{kl},\quad \bar{b}_{\cdot\cdot} = \frac{1}{n^2}\sum^n_{k,l=1}b_{kl}. \label{eq:ekdcov_b}
\end{align}
The empirical kernel distance covariance is the nonnegative number defined by
\begin{align*}
\nu^2_{\Hcal,n}(\bX,\bY) &:= \frac{1}{n^2}A_{kl}B_{kl},
\end{align*}
where 
\begin{align*}
    A_{kl} = a_{kl} - \bar{a}_{k\cdot} - \bar{a}_{\cdot l} + \bar{a}_{\cdot\cdot}, \quad B_{kl} = b_{kl} - \bar{b}_{k\cdot} - \bar{b}_{\cdot l} + \bar{b}_{\cdot\cdot}, \quad \forall k,l=1,\ldots,n.
\end{align*}

\end{definition}

KdCov effectively characterizes the dependency structure between the predictors $\bX$ and the responses $\bY$ in a metric space. Notably, when $\phi$ is the identity mapping (which is associated with the linear kernel), KdCov is reduced to the standard distance covariance defined in \eqref{eq:nu}. This argument also implies that such distance can includes the Wasserstein metric between distributions and Frobenius norm between matrices as special examples, since that the Wasserstein metric between distributions $\mu_1$ and $\mu_2$ can be considered as the $\ell_2$ distance between $F_1^{-1}$ and $F_2^{-1}$ over $[0,1]$, where $F_i$ and the cumulative distribution functions of $\mu_i$, and the Frobenius norm between matrices can be considered as the Euclidean distance between their vectorized forms. By utilizing KdCov, the proposed SDR method is naturally extended to non-Euclidean settings, providing a more general framework for analyzing complex data structures.

To find the linear dependence between $\bX$ and $\bY$, we propose to maximize the KdCov of $\bX$ and $\bY$ as follows:
\begin{align}\label{eq:max_kdcov}
\max_{\bm{\beta}\in\mathbb{R}^{p\times d}}\nu^2_{\Hcal}(\bm{\beta}^T \bX,\bY), \mbox{ s.t. }\bm{\beta}^T\hat{\Sigma}_X\bm{\beta}=\Ibf_d,
\end{align}
where $\bm{\beta}\in\mathbb{R}^{p\times d}$ is the coefficient matrix and $\hat{\Sigma}_X\in\Rbb^{p\times p}$ is the covariance matrix of $\bX$. To solve this problem, we consider the transformation $\bC = \hat{\Sigma}^{\frac12}_X \bm{\beta}$ and $\bZ = \hat{\Sigma}^{-\frac12}_X \bX$, leading to the constrained maximization problem:
\begin{align}\label{eq:empirical_kdcov}
    \max_{\bC\in\Rbb^{p\times d}} \nu^2_{\Hcal,n} (\bC^T \bZ, \bY) := F(\bC),\mbox{ s.t. }\bC \in \mbox{St}(d, p),
\end{align}
where $\mbox{St}(d, p) = \{\bC \in \mathbb{R}^{p\times d}\mid \bC^T\bC = I_d\}$ with $d \leq p$ is the Stiefel manifold and $\nu^2_{\Hcal,n}(\cdot,\cdot)$ represents the empirical KdCov provided in Definition~\ref{def:empirical_kernel_dCov}. Similar to the empirical dCov, the empirical KdCov also admits an alternative formulation. Following the formulas in Definition~\ref{def:empirical_kernel_dCov}, the objective $F(\bC)$ in \eqref{eq:empirical_kdcov} is given by
\begin{align}\label{eq:empirical_kdcov_F}
    F(\bC) := \frac{1}{n^2}\sum^n_{k,l=1}A_{kl}(\bC)B_{kl},
\end{align}
where $A_{kl}(\bC)$ and $B_{kl}$ are provided in \eqref{eq:ekdcov_a} and \eqref{eq:ekdcov_b} respectively with $a_{kl}(\bC)=\|\bC^T Z_k-\bC^TZ_l\|_2=\|\bm{\beta}^T X_k-\bm{\beta}^TX_l\|_2$ and $b_{kl}=\|\phi(Y_k)-\phi(Y_l)\|_2$. 
If we apply the kernel function $\kappa$ on $\{Y_i\}_{i=1}^n$ to obtain the kernel matrix $\bK\in\mathbb{R}^{n\times n}$ and find its square root $\bM=\bK^{1/2}\in\Rbb^{n\times n}$, we have that $F(\bC)=\nu^2_{\Hcal,n} (\bC^T \bZ, \bY)=\nu^2_{n} (\bC^T \bZ, \bM)$, which follows from $\|\bM(:,k)-\bM(:,l)\|_2^2=\bM(:,k)^T\bM(:,k)+\bM(:,l)^T\bM(:,l)-2\bM(:,k)^T\bM(:,l)=\bK(k,k)+\bK(l,l)-2\bK(k,l)=\|\phi(Y_k)-\phi(Y_l)\|_2^2$. 
Combining with \eqref{eq:alt_emp_dcov}, we are able to reformulate the problem \eqref{eq:empirical_kdcov} as follows:
\begin{align}\label{eq:fd_sdr}
    \max_{\bC\in\Rbb^{p\times d}} F(\bC)=\frac{1}{n^2}\sum^n_{k,l=1}a_{kl}(\bC)\tilde{B}_{kl},\mbox{ s.t. }\bC \in \mbox{St}(d, p),
\end{align}
where $a_{kl}(\bC)=\|\bC^T Z_k-\bC^TZ_l\|_2$ and $\tilde{B}_{kl}$ is given in \eqref{eq:b_dd} with $\tilde{b}_{kl}=\|\bM(:,k)-\bM(:,l)\|_2=\|\phi(Y_k)-\phi(Y_l)\|_2$. 
It is worth noting that the matrix $\tilde{B}$, constructed using $\tilde{b}_{kl} = \|\bM(:, k) - \bM(:, l)\|_2$, is identical to the matrix $B$ constructed using ${b}_{kl} = \|\phi(Y_k) - \phi(Y_l)\|_2$. The distinction in notation is introduced to emphasize their respective contexts: $\tilde{B}$ is employed in the empirical distance covariance, $\nu_n^2(\bC^T \bZ, \bM)$, whereas $B$ is used in the empirical kernel distance covariance, $\nu_{\Hcal, n}^2(\bC^T \bZ, \bY)$. This differentiation helps clarify the role of each matrix in their respective formulations.

\subsubsection*{Kernel Selection}

We provide two kernel functions in Algorithm~\ref{alg:matrix-reg-log}: the Gaussian kernel and the Laplacian kernel, defined respectively as
\[
\kappa_G(Y, Y') = \exp(-\gamma_G \cdot d(Y, Y')^2), \quad \kappa_L(Y, Y') = \exp(-\gamma_L \cdot d(Y, Y')),
\]
where $d(Y, Y')$ represents the distance between $Y$ and $Y'$ in the metric space $\mathcal{Y}$. The choice of distance $d(Y, Y')$ is data-dependent, with the Wasserstein distance used for distributional data and the Frobenius norm for matrix data. These selections align with the definition of $d(Y, Y')$ in metric spaces, ensuring compatibility with various data modalities.

\subsubsection*{Bandwidth Selection}

Following the recommendations in \citep{zhang2023dimension}, the kernel bandwidths $\gamma_G$ and $\gamma_L$ are computed as
\[
\gamma_G = \frac{\rho_Y}{2\sigma_G^2}, \quad \gamma_L = \frac{\rho_Y}{2\sigma_L},
\]
where $\rho_Y$ is a scaling parameter (commonly set to $10$), and $\sigma_G^2$ and $\sigma_L$ are defined as
\[
\sigma_G^2 = \binom{n}{2}^{-1} \sum_{i<j} d(Y_i, Y_j)^2, \quad \sigma_L = \binom{n}{2}^{-1} \sum_{i<j} d(Y_i, Y_j).
\]

These formulas ensure that the bandwidths are tailored to the data's distribution and scale, enhancing the performance of the kernel functions in capturing dependencies. By incorporating the flexibility of different distances and kernels, this framework is adaptable to a wide range of applications and data structures.

The constrained maximization problem in \eqref{eq:fd_sdr} can be solved via various routines such as the conjugate gradient and steepest descent. For completeness, we present the formula for the steepest descent algorithm. Specifically, each (sub)gradient descent step is followed by a projection onto the Stiefel manifold: $\bC^{(\iter+1)}=P_S\Big(\bC^{(\iter)}+\alpha^{(\iter)} \partial_\bC F(\bC^{(\iter)})\Big)$, where $P_S(\cdot)$ denotes the projection on the Stiefel manifold and $\alpha^{(\iter)}$ is the optimal step-size obtained from the backtracking line search~\citep{absil2008optimization}. According to \citep[Proposition 3.4]{absil2012projection}, the projection of $\bC$ onto $\mbox{St}(d,p)$ exists uniquely and can be expressed as $P_S(\bC)=\bU\bV^T$ if the SVD of $\bC\in\mathbb{R}^{p\times d}$ is given by $\bC=\bU\Sigma\bV^T$. Additionally, the explicit formula for the subgradient $\partial_\bC F(\bC)$, where $F(\bC)$ is defined in \eqref{eq:fd_sdr}, is derived as follows:
\begin{equation}\label{eq:gradient}
\partial_\bC F(\bC)=\frac{1}{n^2}
\sum^n_{k,l=1} (\partial[\|\bC^T Z_k-\bC^T Z_l\|_2])\Tilde{B}_{kl} = \frac{1}{n^2}
\sum^n_{k\not=l} \frac{\bC^T(Z_k-Z_l)(Z_k-Z_l)^T}{\|\bC^T(Z_k-Z_l)\|_2}\Tilde{B}_{kl},
\end{equation}
where we assume $Z_k\not=Z_l$ for all $k,l\in[n]$, considering only the corresponding case for the subgradient $\partial[\|\bC^T Z_k-\bC^T Z_l\|_2]$. We summarize the algorithm for Fr\'echet SDR via KdCov (Fd-SDR) in Algorithm \ref{alg:matrix-reg-log}. 

\begin{algorithm}[ht]
  \caption{Fd-SDR}
  \label{alg:matrix-reg-log}
  \begin{algorithmic}[1]
    \STATE {\bf Input:} The samples $\{(X_i,Y_i)\}_{i=1}^n\subset\Rbb^{p}\times\Ycal$, kernel function $\kappa: \calY\times \calY\rightarrow\mathbb{R}$, target dimension: $d(<p)$, maximum number of iterations: $K$, stopping threshold: $\varepsilon$. 
    \STATE {\bf Preparation:} Compute the sample covariance $\hat{\Sigma}_X$ and $\bZ = \hat{\Sigma}^{-\frac12}_X \bX$; compute the kernel matrix $\bK\in\mathbb{R}^{n\times n}$, where $\bK_{ij}=\kappa(Y_i,Y_j),\forall i,j\in[n]$ and $ \bM=\bK^{1/2}\in\mathbb{R}^{n\times n}$.
 
    \STATE {\bf Initialization:} $\bC^{(0)}$.
    \FOR{$\iter=0,1,\ldots,K$}
        \STATE Compute  $\bC^{(\iter+1)}=P_S\Big(\bC^{(\iter)}+\alpha^{(\iter)} \partial_\bC F(\bC^{(\iter)})\Big)$,
        where $\partial_\bC F(\bC)$ is defined by \eqref{eq:gradient}, $\tilde{B}_{kl}$ is given in \eqref{eq:b_dd} with $\tilde{B}_{kl}=\|\bM(:,k)-\bM(:,l)\|$ and $\alpha^{(\iter)}$ is obtained by backtracking line search method.
        \STATE Stop if $\|F(\bC^{(\iter+1)})-F(\bC^{(\iter)})\|_F \leq \varepsilon$.
    \ENDFOR
    \STATE {\bf Output:} Estimated coefficient matrix $\hat{\bm{\beta}} = \hat{\Sigma}^{-\frac12}_X\bC^{(\iter)}$.
  \end{algorithmic}
\end{algorithm}

\subsection{Comparison with other Fr\'echet SDR methods}\label{sec:computation_cost}

In this section, we primarily compare our proposed method, Fd-SDR, with two other Fr\'echet sufficient dimension reduction (SDR) methods, Fr\'echet OPG (FOPG) \citep{zhang2023dimension} and Graphical Weighted Inverse Regression Ensemble (GWIRE) \citep{weng2023sparse}, in terms of the computational complexity and implementation. FOPG demonstrates state-of-the-art performance across general settings for Fr\'echet SDR methods, while GWIRE performs effectively in sparse settings, where an additional assumption is made that the coefficient matrix $\bm{\beta}$ is sparse.

\subsubsection{Computational Comparison}

Our proposed method, detailed in Algorithm~\ref{alg:matrix-reg-log}, demonstrates computational superiority over FOPG and GWIRE. The computational complexity of the preparation step is $O(n^3 + n^2p)$, primarily due to the computation of the kernel matrix and its square root. Step 5 in Algorithm~\ref{alg:matrix-reg-log} involve the optimization over the Stiefel manifold and gradient updates, with a complexity of $O(n^2pd+pd^2)$. In comparison, FOPG \citep[Algorithm 2]{zhang2023dimension} requires solving $n^2$ regression problems in Step 2, which has a computational cost of $O(n^3p)$ and the calculation of $\Lambda^{(t)}$ in Step 3 requires $O(n^2p^2)$, and performing eigenvalue decomposition requires $O(p^2d)$. Consequently, its per-iteration cost is at least $O(n^2p(n+p)+p^2d)$, which is larger than the per-iteration cost of Algorithm~\ref{alg:matrix-reg-log}.  
GWIRE requires a per-iteration cost $O(p^3)$, with an additional cost of $O(np^2)$ for computing the sample covariance. However, its performance highly depends on the a regularization parameter, while our method is parameter-free. The implementation of GWIRE requires choosing a regularization parameter by 5-fold cross-validation over $30$ candidates and is therefore usually much slower than Algorithm~\ref{alg:matrix-reg-log} in practice.

\subsubsection{Implementation Comparison}\label{sec:compareother}

Our method shares some similarity with the Fr\'echet sufficient dimension reduction (SDR) approach proposed in \citep{zhang2023dimension}. However, since we utilize the distance covariance matrix and base our algorithm on \citep{huang2024robust}, the two methods exhibit key differences. In \citep{zhang2023dimension}, Fr\'echet SDR is applied using $\kappa(Y, \cdot)$ as the response, which distinguishes it from classical SDR techniques, while our method utilizes the feature map $\phi(Y)$ as the response.


Traditional SDR assumes $Y$ is a real-valued response, so these methods can not be directly applied to Fr\'echet SDR, where $Y$ is a metric space-valued response. To handle metric space-valued response, the method in \citep{zhang2023dimension} proposes to find the a class of real-valued functions $\mathcal{F}$ and then use
\[\bigcup \{ S_{E[f(Y)|X]} : f \in \mathcal{F} \}\]
as the estimator of the central subspace. Here $\Expect(Y |X)$ is called mean dimension reduction subspace, a weaker form of SDR introduced in \citep{10.1214/aos/1021379861} and defined as the subspace $S$ such that $\Expect(Y |X) = \Expect(Y |P_SX)$. Note that $S_{E[f(Y)|X]}$ can be estimated using traditional methods as $f(Y)$ is real-valued, so we can turn an SDR method that targets real-valued responses into one that targets metric space-valued responses, and find the Fr\'echet central subspace.


The method in \citep{zhang2023dimension} proposes to use
$\calF = \{\kappa(\cdot, y) : y \in \Omega_Y\}$, where $\kappa$ is a RHKS, and the method implemented as follows. Let $F_{XY}$ is the empirical distribution of observations and $F_Y$ is the empirical distribution of $Y$, then the proposed estimator of the Fr\'echet central subspace $\mathcal{S}_{\bY\vert \bX}$ is 
\begin{equation}
    M(F_{XY})= \int _{\Omega_Y} M_0(F_{XY},\kappa(\cdot,y))\ dF_Y(y) 
    \label{eq:M},
\end{equation}
where $M_0(F_{XY},\kappa(\cdot,y))$ is the estimator of classical SDR central subspace with responses $$\kappa(Y_1,y), \cdots, \kappa(Y_n,y).$$


\section{Theoretical Guarantees}\label{sec:theory}

In this section, we present three theoretical results for the proposed Fd-SDR method outlined in Algorithm~\ref{alg:matrix-reg-log}: 1) The justification for performing Fr\'echet SDR using feature map-based responses $\{\phi(Y_i)\}_{i=1}^n$ is established in Theorem~\ref{thm:equivalent}; 2) The statistical consistency of Fd-SDR under the model specified in \eqref{eq:model} is provided in Theorem~\ref{thm:consistency}; 3) The algorithmic convergence result for Fd-SDR is presented in Theorem~\ref{thm:convergence}.

Our theoretical guarantee is based on the concept of CMS-Ensemble, introduced by Zhang et al. in \citep{zhang2023dimension} as follows. 
\begin{definition}[Definition of CMS-Ensemble]
The Central Mean Space ensemble (CMS-ensemble) is defined as a family $ \mathcal{F} $ that is rich enough so that
\begin{equation}\label{eq:CMS}
S_{Y|X} = \bigcup \{ S_{E[f(Y)|X]} : f \in \mathcal{F} \}.    
\end{equation}
Recall that $ S_{E[f(Y)|X]} $  is the mean dimension reduction subspace defined in Section \ref{sec:compareother}.
\end{definition}

This definition implies that the ensemble $ \mathcal{F} $ is a collection of functions rich enough  such that the union of the spaces $ S_{E[f(Y)|X]} $ captures the full central subspace $ S_{Y|X} $. In addition, \citep{zhang2023dimension} proves that for a large range of kernels and $\Omega_Y$, the family $\calF = \{\kappa(\cdot, y) : y \in \Omega_Y\}$ constitutes a CMS-ensemble. In particular, they prove that this holds as long as  (1) $\kappa$ is a bounded, compact-convergence (cc)-universal kernel (cc-universal implies that the RKHS associated with 
$\kappa$ is sufficiently rich to approximate any compactly supported continuous function in $\ell_\infty$ norm, and we refer readers to    \citep{JMLR:v7:micchelli06a,JMLR:v12:sriperumbudur11a} for formal definition), and (2) $P_Y$ is a regular probability measure. We refer the interested reader to \citep{zhang2023dimension}  for more technical details.

Theorem~\ref{thm:equivalent} shows that when the kernel is a CMS-ensemble, the central subspace defined by the original responses $\bY$ coincides with that defined by their feature map representations $\phi(\bY)$. It guarantees the performance of the proposed Fd-SDR Algorithm, as its objective function is derived from dCov-based SDR by replacing response $Y$ with its feature map $\phi(Y)$, see \eqref{eq:empirical_kdcov_F}.  Following \citep{zhang2023dimension}, a wide range of kernels fall into the CMS-ensemble category, including Gaussian and Laplacian kernels in various spaces such as the Euclidean space, Wasserstein space, the space of symmetric positive definite matrices, and the sphere. 
Notably, this result pertains to the central subspace and is independent of the specific algorithm used, ensuring broad applicability.

\begin{theorem}[Theoretical guarantee for Fd-SDR  Algorithm and feature map-based SDR]\label{thm:equivalent}
Let $\calS_{Y|X}$ represents the central subspace of random variables $(X,Y)$, and $\kappa$ is a RKHS. If the family of functions $\{\kappa(\cdot, \bY):\bY\in\calY\}$ is a CMS-ensemble, then $\calS_{\phi(Y)|X}=\calS_{Y|X}$, where $\phi$ is the feature map induced by the RKHS $\kappa$. 
\end{theorem}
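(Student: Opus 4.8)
The plan is to show the double inclusion $\calS_{\phi(Y)|X}\subseteq\calS_{Y|X}$ and $\calS_{Y|X}\subseteq\calS_{\phi(Y)|X}$, exploiting the fact that the feature map $\phi$ packages exactly the family $\{\kappa(\cdot,y):y\in\calY\}$ that forms the CMS-ensemble. The key observation I would make first is that for every fixed $y\in\calY$, the scalar function $x\mapsto \langle \phi(x\text{'s response}),\phi(y)\rangle$ is just $\kappa(\cdot,y)$ evaluated on the response, so inner products against $\phi(Y)$ recover the ensemble functions, while conversely $\phi(Y)$ is, by the reproducing property, determined coordinate-wise by the values $\{\kappa(Y,y)\}_{y}$.

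For the inclusion $\calS_{\phi(Y)|X}\subseteq\calS_{Y|X}$: Let $S=\calS_{Y|X}$ and let $P_S$ be the projection onto it. By definition of the central subspace, $Y\indep X\mid P_S^T X$. Since $\phi$ is a deterministic (measurable) function of $Y$, applying $\phi$ to both sides preserves conditional independence, so $\phi(Y)\indep X\mid P_S^T X$, which means $S$ is a dimension reduction subspace for $\phi(Y)$, hence $\calS_{\phi(Y)|X}\subseteq S=\calS_{Y|X}$. (I should note that this direction does not even need the CMS-ensemble hypothesis, only measurability of $\phi$; I would remark on this.)

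For the reverse inclusion $\calS_{Y|X}\subseteq\calS_{\phi(Y)|X}$: here I would use the CMS-ensemble property. Write $T=\calS_{\phi(Y)|X}$, so $\phi(Y)\indep X\mid P_T^T X$. For any fixed $y\in\calY$, the map $u\mapsto\langle u,\phi(y)\rangle$ on $\calF$ is continuous linear, so applying it to the conditional-independence relation yields $\langle\phi(Y),\phi(y)\rangle=\kappa(Y,y)\indep X\mid P_T^T X$; in particular $\Expect[\kappa(Y,y)\mid X]=\Expect[\kappa(Y,y)\mid P_T^T X]$, so $T$ contains $S_{\Expect[\kappa(\cdot,y)(Y)|X]}$ for every $y$. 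Taking the union over $y\in\calY$ and invoking the CMS-ensemble identity \eqref{eq:CMS}, we get $\calS_{Y|X}=\bigcup_{y}S_{\Expect[\kappa(\cdot,y)(Y)|X]}\subseteq T=\calS_{\phi(Y)|X}$. Combining the two inclusions gives the claimed equality.

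The main obstacle — really the only subtle point — is the step that passes from $\phi(Y)\indep X\mid P_T^TX$ (a statement about a Hilbert-space-valued random element) to the scalar conditional independences for each $\kappa(Y,y)$, and then to the \emph{mean} subspace statement. One must be careful that conditional independence of a Banach/Hilbert-space-valued variable is correctly defined (e.g.\ via all continuous linear functionals, or via the Borel $\sigma$-algebra generated by the norm topology) and that applying the bounded linear functional $\langle\cdot,\phi(y)\rangle$ is legitimate; separability of $\calF$ (which holds for the standard kernels listed) makes the two notions agree. I would also need to confirm that the definition of $S_{E[f(Y)|X]}$ used in the CMS-ensemble refers to mean subspaces, so that passing from full conditional independence to equality of conditional means (which is weaker) is in the right direction for the inclusion. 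Everything else is routine measure-theoretic bookkeeping.
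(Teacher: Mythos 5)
Your proposal is correct and follows essentially the same route as the paper's own proof: one inclusion from the fact that $\phi(Y)$ is a measurable function of $Y$, and the reverse inclusion by recovering each ensemble function as $\kappa(Y,y)=\langle\phi(Y),\phi(y)\rangle$ (a continuous linear functional of $\phi(Y)$) and then invoking the CMS-ensemble identity. Your extra care in distinguishing the central mean subspaces $S_{E[\kappa(Y,y)\mid X]}$ from the central subspaces, and in justifying conditional independence for the Hilbert-space-valued $\phi(Y)$ via separability, tightens a point the paper glosses over but does not change the argument.
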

\begin{proof}[Proof of Theorem~\ref{thm:equivalent}]
Let $S=\calS_{Y|X}$, then by definition, $Y\independentt X| P_SX$, and as a result, $\phi(Y)\independentt X|P_SX$ and 
\begin{equation}\calS_{\phi(Y)|X}\subseteq \calS_{Y|X}.\label{eq:equivalent1}\end{equation}

As $\kappa(Y,Y_0)=\langle\phi(Y),\phi(Y_0)\rangle$, we have that for all $Y_0\in\calY$,
\[
\calS_{\kappa(Y,Y_0)|X}\subseteq \calS_{\phi(Y)|X}
\]
and
\begin{equation}\label{eq:equivalent2}
S_{Y|X}=\mathrm{span}(\calS_{\kappa(Y,Y_0)|X}:Y_0\in\calY)\subseteq \calS_{\phi(Y)|X},
\end{equation}
where the first equality follows from the assumption that the family of functions $\{\kappa(\cdot, \bY):\bY\in\calY\}$ is a CMS-ensemble. Combining \eqref{eq:equivalent1} and \eqref{eq:equivalent2}, the theorem is proved.
\end{proof}

Our second result establishes the consistency of Algorithm~\ref{alg:matrix-reg-log}. Following \citep{huang2024robust}, we consider a  model with a general noise term
\begin{equation}\bY=g(\bm{\bbeta}_0^T\bX,\epsilon)=g(\bC_0^T\bZ,\epsilon),\label{eq:model}\end{equation} 
where $\bm{\bbeta}_0$ is a $p\times d$ orthogonal matrix, $g(\cdot)$ is an unknown link function, $\bC_0 = \hat{\Sigma}^{\frac12}_X \bm{\bbeta}_0$, and $\bZ = \hat{\Sigma}^{-\frac12}_X \bX$, and $\epsilon$ is independent of $\bZ$. This model includes the model from \citep{xia2002adaptive} that
 $\bY=g(\bm{\bbeta}_0^T \bX)+\epsilon$ is a special example.

The theorem on  the asymptotic properties of our estimator $\bC$ up to some rotation matrix $\bQ$ in \citep[Proposition 3.1]{huang2024robust} also applies to our setting, which implies the asymptotic property of the estimated central subspace.

\begin{theorem}[Consistency guarantee for Fd-SDR  Algorithm]\label{thm:consistency}
Under model \eqref{eq:model}, and let $\bC \in \mathbb{R}^{d\times p}$ be a basis of the central subspace $S_{Y\mid X}$ with $\bC^T\Sigma_{\bX}\bC=\Ibf_d$. 
Suppose $P^T_{\bC (\Sigma_{\bX})} \bX \indep Q^T_{\bC(\Sigma_{\bX} )}\bX$ 
and the support of $\bX \in \mathbb{R}^{d\times p}$, say $S$, is a compact set. In addition, assume that there exists $\bC'\in\mathbb{R}^{(p-d)\times p}$ such that $[\bC,\bC']^T\Sigma_{\bX}[\bC,\bC']=I_p$ and $\bC^T\bX$ is independent of $\bC'^T\bX$.
Let $\hat{\bC}_n=\arg\min_{\bC^T\Sigma_{\bX}\bC=I_d}\nu^2_n(\bC^T\bX,\bY)$, then there exists a rotation matrix $\bQ$: $\bQ^T\bQ=I_d$ such that
$\hat{\bC}_n\stackrel{P}{\to}\bC\bQ$ (convergence in probability) as $n \to\infty$, with a rate of $1/\sqrt{n}$:  $\min_{\bQ}\|\hat{\bC}_n-\bC\bQ\|_F=O_p(1/\sqrt{n}).$

\label{prop1}
\end{theorem}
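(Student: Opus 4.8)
The plan is to reduce Theorem~\ref{thm:consistency} to the already-established asymptotic result in \citep[Proposition 3.1]{huang2024robust} by verifying that all of its hypotheses are met in our setting. Concretely, the key observation is that the Fd-SDR objective $F(\bC)=\nu^2_{\Hcal,n}(\bC^T\bZ,\bY)$ equals $\nu^2_n(\bC^T\bZ,\bM)$, the \emph{ordinary} empirical $\alpha$-dCov (with $\alpha=1$) between $\bC^T\bZ$ and the real-matrix-valued pseudo-response $\bM=\bK^{1/2}$, as shown in Section~\ref{sec:gen_fsdr}. Hence the estimator $\hat\bC_n$ is exactly the dCov-SDR estimator of \citep{huang2024robust} applied to the transformed data $(\bZ,\phi(\bY))$. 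So first I would set up this identification carefully: write $\widetilde\bY = \phi(\bY)$, note that $\widetilde\bY$ takes values in a (separable) Hilbert space on which $\|\cdot\|$ is well-defined, and observe that the population KdCov $\nu^2_{\Hcal}(\bC^T\bZ,\bY)$ is the population dCov $\nu^2(\bC^T\bZ,\widetilde\bY)$, which is nonnegative and vanishes iff $\bC^T\bZ \indep \widetilde\bY$.

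Second, I would check that the population maximizer of $\nu^2(\bC^T\bZ,\widetilde\bY)$ over the Stiefel constraint indeed recovers (a rotation of) $\bC_0$ from model~\eqref{eq:model}. Under $\bY=g(\bC_0^T\bZ,\epsilon)$ we have $\widetilde\bY=\phi(g(\bC_0^T\bZ,\epsilon))$, a measurable function of $(\bC_0^T\bZ,\epsilon)$ with $\epsilon\indep\bZ$; combined with the stated independence hypothesis $\bC_0^T\bZ\indep \bC_0'^T\bZ$ (equivalently the $P/Q$ split), this gives $\widetilde\bY\indep \bZ \mid \bC_0^T\bZ$, so $\calS_{\widetilde\bY|\bZ}=\mathrm{col}(\bC_0)$ — and here Theorem~\ref{thm:equivalent} is exactly what certifies $\calS_{\phi(Y)|X}=\calS_{Y|X}$, so the central subspace is unchanged. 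The dCov-SDR population identifiability argument of \citep{sheng2016sufficient,huang2024robust} then applies verbatim with $\widetilde\bY$ in place of the scalar response, because that argument only uses that dCov characterizes independence, not any scalar-ness of $Y$.

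Third, with population consistency of the maximizer in hand, I would invoke the $\sqrt n$-rate argument of \citep[Proposition 3.1]{huang2024robust}: the empirical objective $\nu^2_n(\bC^T\bZ,\phi(\bY))$ is a V-statistic whose kernel is Lipschitz in $\bC$ on the compact Stiefel manifold (using compactness of the support $S$ of $\bX$ to bound $\|Z_k-Z_l\|$ and its gradient away from the $Z_k=Z_l$ diagonal), so standard empirical-process / M-estimation machinery — uniform LLN over $\mathrm{St}(d,p)$, a central limit theorem for the V-statistic, and a second-order Taylor expansion at the population maximizer together with the nonsingularity of the limiting Hessian in the directions transverse to the rotation orbit — yields $\min_{\bQ}\|\hat\bC_n-\bC_0\bQ\|_F=O_p(1/\sqrt n)$. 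The one genuinely new wrinkle, and the step I expect to be the main obstacle, is handling $\phi(\bY)$ rather than a fixed finite-dimensional response: one must make sure the moment conditions behind the V-statistic CLT (essentially $\Ebb\|X\|^2<\infty$ and $\Ebb\|\phi(Y)\|^2=\Ebb\,\kappa(Y,Y)<\infty$) survive, which needs boundedness (or at least integrability) of the kernel $\kappa$ — true for the Gaussian and Laplacian kernels used in Algorithm~\ref{alg:matrix-reg-log} — and that, since the estimator only ever sees $\bM=\bK^{1/2}\in\Rbb^{n\times n}$, the argument is phrased so that replacing the infinite-dimensional $\phi(Y_k)$ by the $n$-dimensional Gram-based surrogate $\bM(:,k)$ is \emph{exact} for every $n$ (which it is, since pairwise distances agree), so no approximation error is incurred. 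Once that is set up, the remainder is a routine transcription of the proof in \citep{huang2024robust}.
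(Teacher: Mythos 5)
Your proposal follows essentially the same route as the paper's own proof: both reduce the statement to \citep[Proposition 3.1]{huang2024robust} by observing that the pseudo-response $\phi(Y)=\phi(g(\bm{\bbeta}_0^T\bX,\epsilon))$ is still a function of $\bm{\bbeta}_0^T\bX$ and $\epsilon$, so the Fd-SDR estimator is exactly the dCov-SDR estimator applied to the transformed data, and then obtain the $1/\sqrt{n}$ rate from the standard M-estimation/Taylor-expansion argument around the population maximizer. If anything, your write-up is more careful than the paper's, since you explicitly check the moment condition $\Ebb\,\kappa(Y,Y)<\infty$ for the Hilbert-space-valued response and the exactness of the $\bM=\bK^{1/2}$ surrogate, points the paper leaves implicit.
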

\begin{proof}[Proof of Theorem~\ref{prop1}]
First, we note that our method is  rSDR~\citep{huang2024robust} with responses being $\phi(Y)$. Since our response $\phi(Y)=\phi(g(\bm{\bbeta}_0^T\bX,\epsilon))$ is also a function of $\bm{\bbeta}_0^T\bX$ and $\epsilon$, the proof of convergence follows from ~\citep[Proposition 3.1]{huang2024robust}.


Next, we prove the convergence rate $ 1/\sqrt{n} $. The minimization of the empirical risk function implies that the difference $ \hat{\bC}_n - \bC \bQ $ becomes small as $ n \to \infty $. More precisely, using Taylor expansion around the true parameter $ \bC $, we have
\[
\nu_n^2(\hat{\bC}_n^T \bX, \bY) - \nu_n^2(\bC^T \bX, \bY) = O_p(1/{n}).
\]
This indicates that the difference $ \hat{\bC}_n - \bC \bQ $ is of order $ O_p(1/\sqrt{n}) $.
Finally, to conclude that $ \|\hat{\bC}_n - \bC \bQ\|_F = O_p(1/\sqrt{n}) $, we apply standard results from matrix perturbation theory \citep{stewart1990matrix}. Since $ \hat{\bC}_n $ is constrained by $ \hat{\bC}_n^T \Sigma_{\bX} \hat{\bC}_n = \Ibf_d $, the difference between $ \hat{\bC}_n $ and $ \bC \bQ $ is bounded in the Frobenius norm by the rate of convergence of the empirical risk function, which yields the desired result.
Thus, we have established that
\[
\min_{\bQ} \|\hat{\bC}_n - \bC \bQ\|_F = O_p(1/\sqrt{n}).
\]

\end{proof}

Similar to \citep[Proposition 3.1]{huang2024robust}, Theorem~\ref{prop1} requires an additional assumption related to the decomposition of $ X $ into two independent components. Further discussion on this condition can be found in \citep[Section 3.2]{sheng2013direction}. For instance, this assumption holds when $ X $ follows a normal distribution \citep{zhang2015direction} and is asymptotically satisfied when $ p $ is large, as shown in \citep{hall1993almost}.

Our final result establishes the algorithmic convergence of Algorithm~\ref{alg:matrix-reg-log}. 
Since Fd-SDR is based on the algorithm proposed in \citep{huang2024robust}, its theoretical guarantee in \citep[Theorem 3.2]{huang2024robust} can be easily generalized to our setting, which states that any accumulation point is a stationary point of the objective function.
\begin{theorem}[Convergence of Fd-SDR  Algorithm]\label{thm:convergence}
(a) Any accumulation point of the sequence $\left\{\hat{\bC}^{(t)}\right\}_{t\geq 0}$ generated by the proposed algorithm converges is a stationary point of $F_{\eta}(\bC)$ over the Stiefel manifold, the set of all orthogonal matrices of size $\mathbb{R}^{p\times d}$.

%
(b) If in addition, the global maximizer $\hat{\bC}$ it is the unique stationary point in its neighborhood $\mathcal{N}$, and  $F_{\eta}(\bC)-F_{\eta}(\hat{\bC})\leq -c\|\bC-\hat{\bC}\|_F^2$ for any $\bC$ in $\mathcal{N}$ and some $c>0$. Then when the initialization $\hat{\bC}^{(0)}$ is sufficiently close to $\hat{\bC}$, the sequence $\left\{\hat{\bC}^{(t)}\right\}_{t\geq 0}$ converges to $\hat{\bC}$. 
\end{theorem}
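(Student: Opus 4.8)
The plan is to reduce both claims to the corresponding statements in~\citep[Theorem 3.2]{huang2024robust} by verifying that the objective $F(\bC)$ in~\eqref{eq:fd_sdr} (written as $F_\eta$ in the theorem statement, with $\eta$ the smoothing parameter for the subgradient) satisfies the structural hypotheses required there. The key observation, already exploited in the proofs of Theorems~\ref{thm:equivalent} and~\ref{prop1}, is that Fd-SDR is exactly the $\alpha$-dCov algorithm of~\citep{huang2024robust} (with $\alpha=1$) run on the transformed responses $\bM(:,k)=\phi(Y_k)$ via $\tilde b_{kl}=\|\bM(:,k)-\bM(:,l)\|_2$; the matrix $\tilde B$ plays the role of the fixed response-distance matrix $B$ in that reference. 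Since $\tilde B$ is a fixed $n\times n$ matrix not depending on $\bC$, and the $\bC$-dependence of $F$ enters only through the terms $a_{kl}(\bC)=\|\bC^T(Z_k-Z_l)\|_2$, the objective has precisely the same functional form as the one analyzed in~\citep{huang2024robust}. Thus the first step is to state this correspondence cleanly and note that all smoothness/Lipschitz properties of $F$ used in~\citep{huang2024robust} transfer verbatim.

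For part~(a), I would invoke the standard convergence theory for projected (sub)gradient ascent on a compact Riemannian manifold (here the Stiefel manifold $\mathrm{St}(d,p)$, which is compact). The required ingredients are: (i) the iterates $\bC^{(t)}$ remain on $\mathrm{St}(d,p)$, which holds by construction since each step applies the projection $P_S(\cdot)=\bU\bV^T$ from~\citep[Proposition 3.4]{absil2012projection}; (ii) the backtracking line search guarantees a sufficient-ascent (Armijo-type) condition, so $\{F(\bC^{(t)})\}$ is monotone and, being bounded above on the compact feasible set, convergent; (iii) the step sizes do not vanish too fast. Then a subsequence argument along any accumulation point $\bar\bC$, combined with continuity of the (sub)gradient and of the projection retraction, shows the projected gradient at $\bar\bC$ vanishes, i.e.\ $\bar\bC$ is a stationary point. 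This is exactly the content of~\citep[Theorem 3.2(a)]{huang2024robust}, and since our $F$ meets the same hypotheses, the conclusion carries over.

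For part~(b), the argument is a standard local capture / \L ojasiewicz-type argument: under the hypothesis that $\hat\bC$ is the unique stationary point in a neighborhood $\mathcal N$ and satisfies the local quadratic-growth bound $F_\eta(\bC)-F_\eta(\hat\bC)\le -c\|\bC-\hat\bC\|_F^2$ on $\mathcal N$, the monotone ascent of the line search together with the sufficient-ascent inequality forces the iterates, once inside a small enough sub-neighborhood of $\hat\bC$, to stay trapped and to have $\|\bC^{(t+1)}-\bC^{(t)}\|_F\to 0$; combined with part~(a) (every accumulation point is stationary) and uniqueness of the stationary point in $\mathcal N$, this yields $\bC^{(t)}\to\hat\bC$. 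Again this mirrors~\citep[Theorem 3.2(b)]{huang2024robust} line by line, so I would simply cite it after confirming the hypotheses match.

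The main obstacle is not any single hard estimate but rather the bookkeeping needed to certify that the generalization is genuinely ``for free'': one must check that replacing the real-valued response distances $b_{kl}$ by the feature-space distances $\tilde b_{kl}=\|\phi(Y_k)-\phi(Y_l)\|_2$ changes nothing structurally — in particular that $\tilde B$ is still a fixed, bounded, symmetric matrix with zero row/column sums after double-centering, so that the nonsmoothness of $F$ is confined to the finitely many loci $\{\bC:\bC^T(Z_k-Z_l)=0\}$ exactly as in~\citep{huang2024robust}, and that the subgradient formula~\eqref{eq:gradient} is the valid Clarke subdifferential away from those loci. Once that is in place, the proof is a direct appeal to~\citep[Theorem 3.2]{huang2024robust}; I would also remark that the CMS-ensemble and kernel-choice considerations play no role here, since convergence is a purely algorithmic/analytic property of $F$ on the Stiefel manifold.
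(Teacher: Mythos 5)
Your proposal takes essentially the same route as the paper: the paper offers no standalone proof of Theorem~\ref{thm:convergence}, instead noting that Fd-SDR is the algorithm of \citet{huang2024robust} applied with the feature-map distances $\tilde b_{kl}=\|\phi(Y_k)-\phi(Y_l)\|_2$ in place of the response distances, so that \citep[Theorem 3.2]{huang2024robust} carries over directly, which is exactly your reduction. Your write-up is in fact more careful than the paper's, since you explicitly verify that $\tilde B$ is a fixed matrix independent of $\bC$, that the nonsmoothness structure and subgradient formula \eqref{eq:gradient} are unchanged, and you sketch the compact-manifold/Armijo and local-capture arguments underlying parts (a) and (b).
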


\section{Simulation Studies}\label{sec:syn}
In this section, we compare the performance of the proposed Fd-SDR method with FOPG~\citep{zhang2023dimension} and GWIRE~\citep{weng2023sparse} in terms of accuracy across the different synthetic settings described in Sections~\ref{sec:scenario_I}-\ref{sec:scenario_III}. Additionally, we provide a comparison of the computational efficiency in Section~\ref{sec:time_comparison}.

\subsection{Scenario I: SDR for distributions}\label{sec:scenario_I}

We set $\bbeta_1 = (1,1,0,\ldots,0)^T$, $\bbeta_2 = (0,\ldots,0,1,1)^T$, $\bbeta_3 = (1,2,0,\ldots,0,2)^T$, and $\bbeta_4 = (0,0,1,2,2,0,\ldots,0)^T$ in $\mathbb{R}^p$. Additionally, we let $Y \sim \mathcal{N}(\mu_Y, \alpha \sigma_Y^2)$, where $\mu_Y$ and $\sigma_Y$ are random variables depending on $X$. Denote the Fr\'echet central subspace by $\mathcal{S}$ and its estimation with $\hat{\mathcal{S}}$. The following four models are considered:
\begin{enumerate}[label=(\arabic*)]
    \item $\mu_Y \sim \Ncal(\exp(\bbeta_1^T X),0.1)$, $\sigma_Y = 0.5$, $\alpha=1$ and $\Scal=[\bbeta_1]$.
    \item $\mu_Y \sim \Ncal(\exp(\bbeta_1^T X),0.1)$, $\sigma_Y = \exp(\bbeta_2^T X)$ with truncated range $(10^{-1},10)$, $\alpha=1$ and $\Scal=[\bbeta_1,\bbeta_2]$.
    \item $\mu_Y \sim \Ncal(3(\bbeta_3^T X), 0.5^2)$, $\sigma_Y = \text{Gamma} ((2+2\bbeta_4^T X)^2/\nu , \nu /(2+2 \bbeta_4^T X) )$ with truncated range $(10^{-1},10)$ and $\nu = 0.5$, $\alpha=0.2/0.4$ and $\Scal=[\bbeta_3,\bbeta_4]$.
    \item $\mu_Y \sim \Ncal(3\sin(\bbeta_3^T X), 0.5^2)$, $\sigma_Y = \text{Gamma} ((2+2\bbeta_4^T X)^2/\nu , \nu /(2+2 \bbeta_4^T X) )$ with truncated range $(10^{-1},10)$ and $\nu = 0.5$, $\alpha=0.2/0.4$ and $\Scal=[\bbeta_3,\bbeta_4]$.
\end{enumerate}
We consider three different approaches for generating the predictor $X$:
\begin{enumerate}[label=(\alph*)]
    \item $X \sim \Ncal(0,\Ibf_p)$.
    \item Generate $U_1,U_2,\ldots,U_p$ from AR(1) model with mean 0 and covariance $\Sigma = (0.5^{|i-j|})_{i,j}$, then set $X=(\sin(U_1),|U_2|,U_3,\ldots,U_p)$.
    \item Generate $U_1,U_2,\ldots,U_p$ from AR(1) model with mean 0 and covariance $\Sigma = (0.5^{|i-j|})_{i,j}$, then set $X=[\Phi(U_1),\ldots,\Phi(U_p)]$, where $\Phi(\cdot)$ is the c.d.f. of the standard normal distribution.
\end{enumerate}
We generate the set of samples $\{(X_i,Y_i)\}_{i=1}^n$ following models above. For  Models (1) and (2), we use approaches (a) and (b) to generate $X$. For Models (3) and (4), approach (c) is used to generate $X$. The estimated Fr\'echet central subspace $\hat{\Scal}$ is obtained using Fd-SDR, FOPG, and GWIRE with the Gaussian kernel $\kappa_G(Y, Y') = \exp(-\gamma_G \cdot d(Y, Y')^2)$. Details on the choice of bandwidth and related specifications can be found in Section~\ref{sec:gen_fsdr}. The estimation error is defined by the Frobenius norm of the difference between the projectors on $\mathcal{S}$ and $\hat{\mathcal{S}}$:  $e(\mathcal{S},\hat{\mathcal{S}})=||P_{\mathcal{S}}-P_{\hat{\mathcal{S}}}||_{F}=|| B(B^T B)^{-1}B^T-\hat{B}(\hat{B}^T \hat{B})^{-1}\hat{B}^T||_{F}$. We repeat the simulation 100 times for two cases, $(n,p)=(200,10)$ and $(n,p)=(400,20)$, and report the mean errors as well as the standard deviations in Table~\ref{tab:sceI}. The box plots comparing the errors of the three estimators are shown in Figure~\ref{fig:combined_errors_scenario_1}. 

\begin{table}[ht]
\centering
\renewcommand{\arraystretch}{1.2} 
\begin{tabular}{c|l|ccc}
\toprule
$(n,p)$                     & Model               & Fd-SDR              & FOPG                & GWIRE               \\ \midrule
\multirow{8}{*}{(200,10)} & (1-a)               & 0.09(0.03)          & 0.16(0.08)          & \textbf{0.07(0.05)} \\
                          & (1-b)               & \textbf{0.09(0.03)} & 0.11(0.04)          & 0.20(0.06)          \\
                          & (2-a)               & \textbf{0.18(0.03)} & 0.25(0.12)          & 0.23(0.07)          \\
                          & (2-b)               & \textbf{0.25(0.07)} & 0.27(0.06)          & 0.71(0.20)          \\
                          & (3-c), $\alpha=0.2$ & \textbf{0.45(0.17)} & 1.39(0.03)          & 0.89(0.17)          \\
                          & (3-c), $\alpha=0.4$ & \textbf{0.30(0.06)} & 0.70(0.36)          & 0.67(0.12)          \\
                          & (4-c), $\alpha=0.2$ & \textbf{0.36(0.07)} & 1.29(0.15)          & 0.72(0.14)          \\
                          & (4-c), $\alpha=0.4$ & 0.34(0.07)          & \textbf{0.29(0.07)} & 0.62(0.08)          \\
                          \midrule
\multirow{8}{*}{(400,20)} & (1-a)               & 0.09(0.02)          & 0.22(0.05)          & \textbf{0.05(0.04)} \\
                          & (1-b)               & \textbf{0.09(0.02)} & 0.18(0.03)          & 0.20(0.05)          \\
                          & (2-a)               & 0.22(0.03)          & 0.29(0.09)          & \textbf{0.13(0.06)} \\
                          & (2-b)               & \textbf{0.26(0.05)} & 0.33(0.05)          & 0.64(0.15)          \\
                          & (3-c), $\alpha=0.2$ & \textbf{0.40(0.08)} & 1.40(0.03)          & 0.86(0.17)          \\
                          & (3-c), $\alpha=0.4$ & \textbf{0.30(0.04)} & 0.92(0.35)          & 0.64(0.09)          \\
                          & (4-c), $\alpha=0.2$ & \textbf{0.35(0.05)} & 1.36(0.10)          & 0.69(0.10)          \\
                          & (4-c), $\alpha=0.4$ & 0.34(0.05)          & \textbf{0.29(0.05)} & 0.59(0.05) \\
                          \bottomrule
\end{tabular}
\caption{Errors of Fd-SDR, FOPG and GWIRE in mean and standard deviation (in parentheses) for Scenario I.}
\label{tab:sceI}
\end{table}

\begin{figure}[ht]
    \centering
    \begin{subfigure}[b]{0.4\textwidth}
        \centering
        \includegraphics[width=\textwidth]{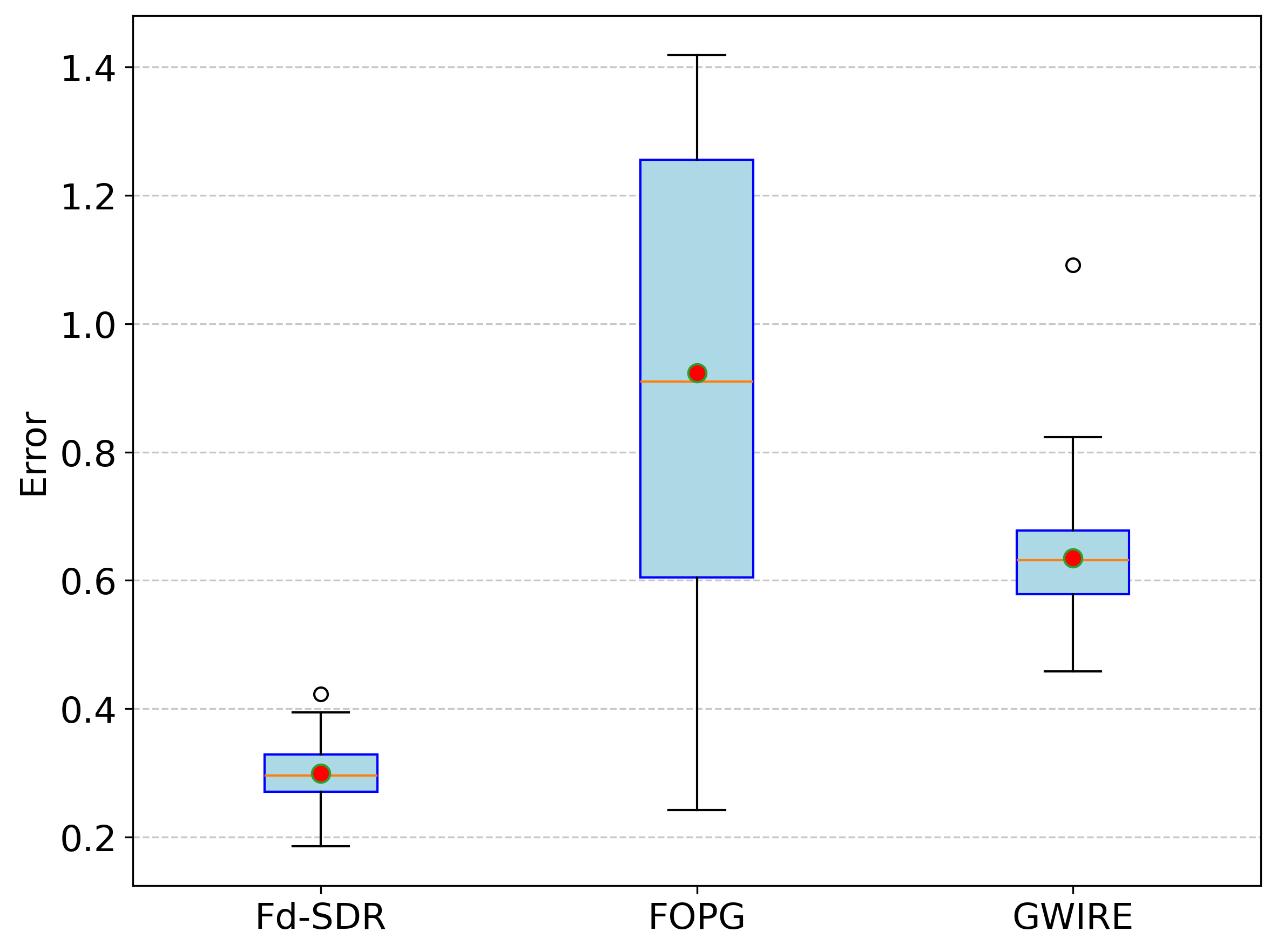}
        \caption{Model (3-c), $\alpha=0.4$}
        \label{fig:case3_alpha0.4}
    \end{subfigure}
    \begin{subfigure}[b]{0.4\textwidth}
        \centering
        \includegraphics[width=\textwidth]{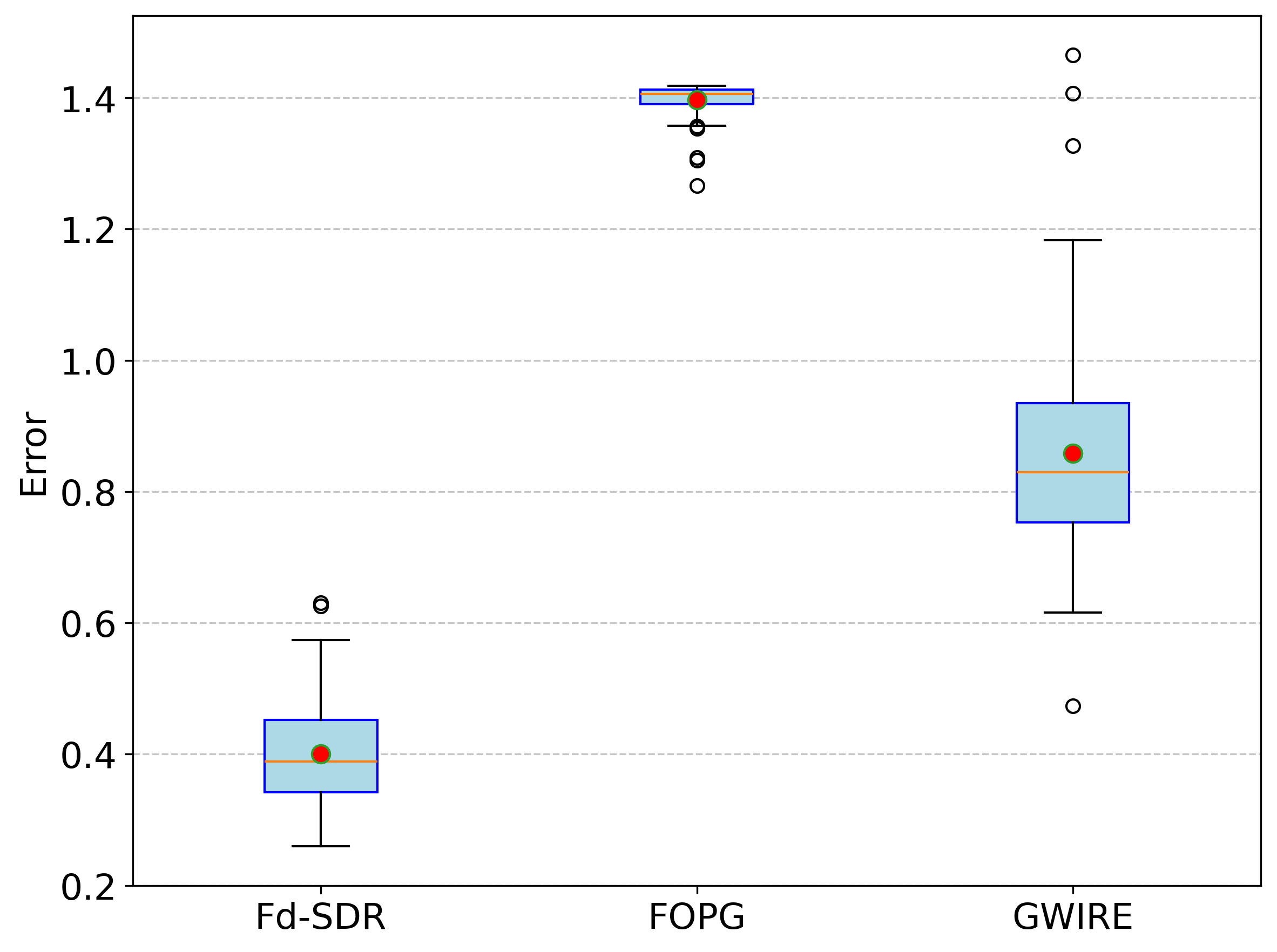}
        \caption{Model (3-c), $\alpha=0.2$}
        \label{fig:case3_alpha0.2}
    \end{subfigure}
    
    \vspace{1em} 
    
    \begin{subfigure}[b]{0.4\textwidth}
        \centering
        \includegraphics[width=\textwidth]{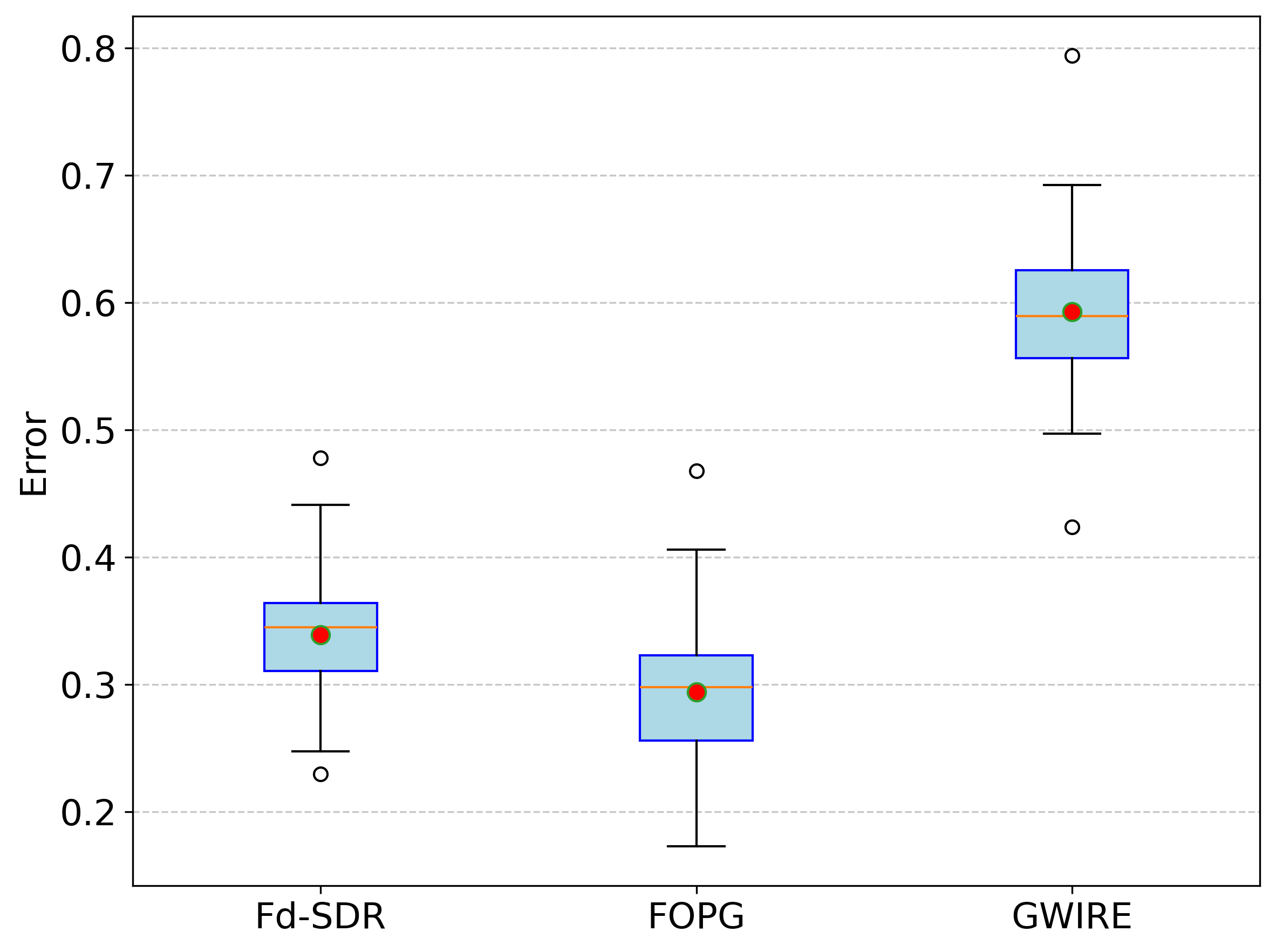}
        \caption{Model (4-c), $\alpha=0.4$}
        \label{fig:case4_alpha0.4}
    \end{subfigure}
    \begin{subfigure}[b]{0.4\textwidth}
        \centering
        \includegraphics[width=\textwidth]{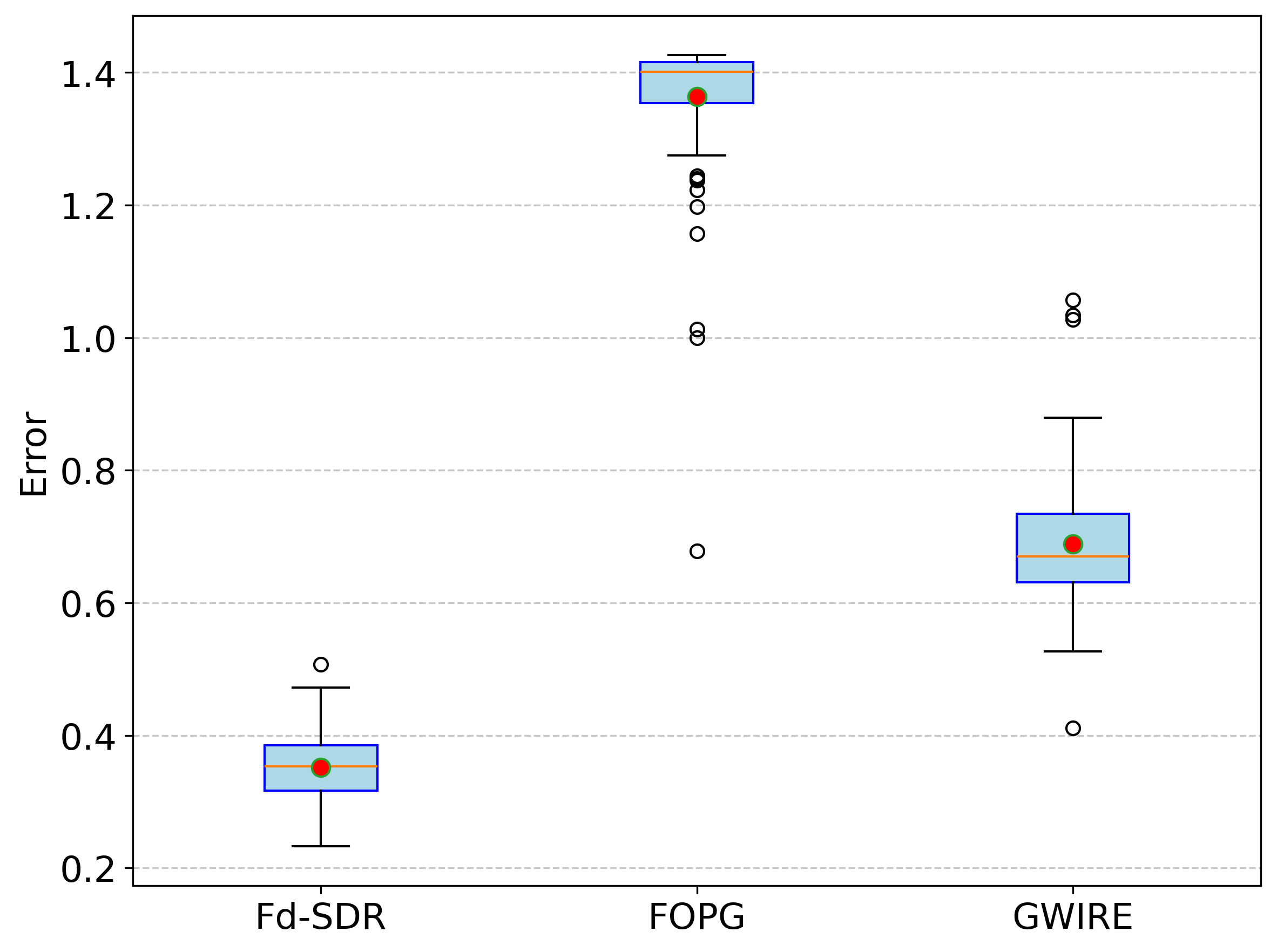}
        \caption{Model (4-c), $\alpha=0.2$}
        \label{fig:case4_alpha0.2}
    \end{subfigure}    
    \caption{Box plots of Fd-SDR, FOPG and GWIRE in Models (3-c) and (4-c) with $n=400$ from Scenario I.}
    \label{fig:combined_errors_scenario_1}
\end{figure}

The results in Table~\ref{tab:sceI} demonstrate that Fd-SDR achieves smaller estimation errors than FOPG and GWIRE in most models, with a few exceptions where Fd-SDR remains competitive. Especially in Model (3-c) and (4-c), Fd-SDR can estimate the accurate subspace while other two estimators fail. In these two models, the errors for both FOPG and GWIRE increase significantly as the value of $\alpha$ decreases while Fd-SDR still leads to a good estimation. We present the box plots in Figure~\ref{fig:combined_errors_scenario_1} for these two models with $n=400$. The plot indicates that Fd-SDR has smaller mean and variance over the 100 repeated simulations and has few outliers than FOPG and GWIRE. Moreover, we notice that GWIRE shows the smallest estimation errors in Model (1-a) for both of $(n,p)=(200,10)$ and $(n,p)=(400,20)$. However, GWIRE fails and has large errors especially in Model (3-c) and (4-c), which is mainly because GWIRE assumes the sparsity of the subspace, an assumption that these two models do not satisfy.
The results highlight that Fd-SDR is able to estimate the Fr\'echet central subspace and has better performance compared to the existing FSDR methods across various models, making it a reliable method in general.

\subsection{Scenario II: SDR for matrices}
In this synthetic dataset, we use similar settings to those in Scenario I, but with a modification in how the response $Y$ is generated. Specifically, the vectors $ \bbeta_1 $ and $ \bbeta_2 $, as well as the approaches (a) and (b) for generating $ X $ from Scenario I, are retained. We set $ \log(Y) \sim \Ncal_{q \times q}(\log(D(X)), 0.5^2\Ibf_q) $, where two models of $ D(X) $ are defined:
\begin{enumerate}[label=(\arabic*)]
    \item $D(X) = \begin{bmatrix}
               1 & \rho(X) \\
               \rho(X) & 1
               \end{bmatrix}$, where $\rho(X) = 0.8\cos{(\bbeta_1^T X)}$, and $\Scal=[\bbeta_1]$.
    \item $ D(x) = \begin{bmatrix}
                1 & \rho_1(X) & \rho_2(X)\\
                \rho_1(X) & 1 &\rho_1(X)  \\
                \rho_2(X) & \rho_1(X) & 1
        \end{bmatrix}$ , where $\rho_1(X) = 0.8\cos{(\bbeta_1^T X)}, \rho_2(X)= 0.8\sin(\bbeta_2^T X)$, and $\Scal=[\bbeta_1,\bbeta_2]$.
\end{enumerate}
Here $\Ncal_{q\times q}(M,\Sigma)$ with symmetric mean $M$ and covariance $\Sigma$ refers to the symmetric matrix variate normal distribution~\citep{schwartzman2006random}. The p.d.f. of $Z\sim \Ncal_{q\times q}(M,\Sigma)$ is given by
\begin{align*}
    f(z;M,\Sigma)=\frac{1}{(2\pi)^{p/2}\vert \Sigma\vert^{(q+1)/2}}\exp(-\frac{1}{2}\text{tr}((z-M)\Sigma^{-1})^2),
\end{align*}
where $p=q(q+1)/2$.

Similar to Scenario I, we estimate Fr\'echet central subspace using Fd-SDR, FOPG, and GWIRE with the Gaussian kernel $\kappa_G(Y, Y') = \exp(-\gamma_G \cdot d(Y, Y')^2)$. The simulation is repeated 100 times, and the means and standard deviations of the estimation errors for the three estimators are summarized in Table~\ref{tab:sceII}. The results of this table show that Fd-SDR outperforms the other two methods and the errors in Model (2) are noticeable smaller than FOPG and GWIRE with the only exception in Model (1-b). Even in this model, Fd-SDR only has slightly large errors. We remark that GWIRE is not applicable (NA) for Model (1-a) as it fails to select the regularization parameter.

\begin{table}[ht]
\centering
\renewcommand{\arraystretch}{1.2} 
\begin{tabular}{c|c|ccc}
\toprule
$(n,p)$                     & Model               & Fd-SDR              & FOPG                & GWIRE               \\ \midrule
\multirow{4}{*}{(200,10)} & (1-a)                      & \textbf{0.24(0.29)}      & 0.30(0.33)             & NA                        \\
                          & (1-b)                      & 0.22(0.08)               & 0.24(0.10)             & \textbf{0.20(0.06)}     \\
                          & (2-a)                      & \textbf{0.30(0.07)}      & 0.37(0.23)             & 1.35(0.14)              \\
                          & (2-b)                      & \textbf{0.43(0.12)}      & 0.67(0.37)             & 0.72(0.19)              \\
                          \midrule
\multirow{4}{*}{(400,20)} & (1-a)                      & \textbf{0.36(0.44)}      & 0.52(0.49)             & NA                        \\
                          & (1-b)                      & 0.24(0.07)               & 0.24(0.10)             & \textbf{0.20(0.07)}     \\
                          & (2-a)                      & \textbf{0.45(0.35)}      & 0.66(0.47)             & 1.41(0.04)              \\
                          & (2-b)                      & \textbf{0.44(0.07)}      & 0.74(0.34)             & 0.69(0.14)  \\
                          \bottomrule
\end{tabular}
\caption{Errors of Fd-SDR, FOPG and GWIRE in mean and standard deviation (in parentheses) for Scenario II.}
\label{tab:sceII}
\end{table}


\subsection{Scenario III: SDR for spherical data}\label{sec:scenario_III}
In this simulation, we use similar settings to those in Scenario I, but with modifications to the subspace vectors and the way the response $Y$ is generated. The approaches (a) and (c) for generating $X$ from Scenario I are applied. Let   $\bbeta_1 = \frac{1}{\sqrt{3}}(x_1,x_2,x_3,0,\ldots,0)^T$,\ $\bbeta_2 = \frac{1}{\sqrt{3}}(0,\ldots,0,x_{p-2},x_{p-1},x_p)^T$ and $\bbeta_3 = (0,1,0,\ldots,0)^T$  with $x_i\overset{i.i.d.}{\sim} \Ncal(0,1)$. We generate the spherical data $Y$ from the following three models:
\begin{enumerate}[label=(\arabic*)]
    \item $Y$ is generated by $Y = \cos{(||\varepsilon(X)||)}m(X)+\sin{(||\varepsilon(X)||)}\frac{\varepsilon(X)}{||\varepsilon(X)||}$ where $m(X),\varepsilon(X)$ are given by
    \begin{align*}
        m(X) &= (\cos{(\pi \bbeta_1^T X)},\sin{(\pi \bbeta_1^T X)}) \\
        \varepsilon(X) &= (-\delta\sin{(\pi \bbeta_1^T X)},\delta\cos{(\pi \bbeta_1^T X)})
    \end{align*}
    with $\delta\sim N(0,0.2^2)$. The underlying subspace is $\mathcal{S} = [\bbeta_1]$.
    
    \item $Y$ is generated by $Y = \cos{(||\varepsilon||)}m(X)+\sin{(||\varepsilon||)}\frac{\varepsilon}{||\varepsilon||}$ where $m(X),\varepsilon$ are given by
    \begin{align*}
        m(X) &= ((1-(\bbeta_3^T X)^2)^{1/2}\cos{(\pi \bbeta_1^T X )},(1-(\bbeta_3^T X)^2)^{1/2}\sin{(\pi \bbeta_1^T X)},\bbeta_3^T X) \\
        \varepsilon &= \delta_1v_1+\delta_2v_2,
    \end{align*}
    where $\delta_{1,2}\sim N(0,0.2^2)$ and $v_1,v_2$ are orthogonal basis of the tangent space $T_{m(X)}\Sbb^2$. The underlying subspace is $\mathcal{S} = [\bbeta_1,\bbeta_3]$.

    \item $Y$ is generated by 
    \begin{align*}
        Y = (\sin{(\bbeta_1^T X+ \delta_{1})}\sin{(\bbeta_2^T X+\delta_{2})},\sin{(\bbeta_1^T X+\delta_{1})}\cos{(\bbeta_2^T X+\delta_{2})},\cos{(\bbeta_1^T X+\delta_{1})})
    \end{align*}
    with $\delta_{1,2}\sim\Ncal(0,0.2^2)$. The underlying subspace is $\mathcal{S} = [\bbeta_1,\bbeta_2]$.
\end{enumerate}
By applying the Gaussian kernel function, we estimate Fr\'echet central subspace using Fd-SDR, FOPG, and GWIRE. The simulation is repeated 100 times, and the means and standard deviations of the estimation errors for the three estimators are summarized in Table~\ref{tab:sceIII}.

The simulation results in Table \ref{tab:sceIII} demonstrate that Fd-SDR is competitive with FOPG and GWIRE in Model (1-a), (1-b) and (2-a), while achieving smaller errors in other three models for both cases of $n=200, p=10$ and $n=400, p=20$. We conclude that for the spherical data that generated in this scenario Fd-SDR can estimate the Fr\'echet subspace well and suit for the general settings of Fr\'echet SDR. 

\begin{table}[ht]
\centering
\renewcommand{\arraystretch}{1.2} 
\begin{tabular}{c|c|ccc}
\toprule
$(n,p)$                     & Model               & Fd-SDR              & FOPG                & GWIRE               \\ \midrule
\multirow{6}{*}{(200,10)} & (1-a)                      & 0.12 (0.06)       & \textbf{0.10 (0.05)}     & 0.23 (0.10)               \\
                          & (1-b)                      & 0.15 (0.10)                & \textbf{0.13 (0.09)}     & 0.49 (0.14)      \\
                          & (2-a)                      & 0.32 (0.18)       & \textbf{0.30 (0.18)}     & 0.33 (0.23)               \\
                          & (2-b)                      & \textbf{0.39 (0.28)}       & 0.40 (0.29)              & 0.69 (0.26)               \\
                          & (3-a)                      & \textbf{0.23 (0.14)}       & 0.29 (0.30)              & 0.79 (0.44)               \\
                          & (3-b)                      & \textbf{0.26 (0.10)}       & 0.31 (0.26)              & 0.85 (0.24)      \\
                          \midrule
\multirow{6}{*}{(400,20)} & (1-a)                      & 0.12 (0.06)       & \textbf{0.11 (0.05)}     & 0.21 (0.10)               \\
                          & (1-b)                      & 0.16 (0.17)       & \textbf{0.15 (0.17)}     & 0.49 (0.15)               \\
                          & (2-a)                      & 0.37 (0.28)                & 0.39 (0.31)              & \textbf{0.30 (0.31)}      \\
                          & (2-b)                      & \textbf{0.41 (0.23)}       & 0.46 (0.29)              & 0.66 (0.23)               \\
                          & (3-a)                      & \textbf{0.25 (0.15)}       & 0.39 (0.37)              & 0.65 (0.45)               \\
                          & (3-b)                      & \textbf{0.28 (0.09)}       & 0.45 (0.36)              & 0.90 (0.28)  \\
                          \bottomrule
\end{tabular}
\caption{Errors of Fd-SDR, FOPG and GWIRE in mean and standard deviation (in parentheses) for Scenario III. Here Fd-SDR uses the $\hat{\Scal}^{\text{GWIRE}}$ as the initialization.}
\label{tab:sceIII}
\end{table}

\subsection{Running Time Comparison}\label{sec:time_comparison}
As outlined in Section~\ref{sec:computation_cost}, we previously discussed the computational complexity of the proposed method, Fd-SDR, and demonstrated its computational advantage over existing Fr\'echet SDR methods such as FOPG and GWIRE. In this section, we provide a numerical verification of this advantage by comparing the running times of Fd-SDR, FOPG, and GWIRE using the settings in Sections~\ref{sec:scenario_I}-\ref{sec:scenario_III}. All simulations were performed in Python on a machine equipped with an AMD Ryzen 5 5600X 6-core processor.

Table \ref{tab:time_comparison} presents the running times for each scenario with $(n,p) = (400,20)$. The results demonstrate that Fd-SDR is considerably faster than both FOPG and GWIRE across all settings, achieving speeds 10 to 20 times faster than FOPG and 50 to 100 times faster than GWIRE. These findings confirm that Fd-SDR is significantly more efficient than FOPG and GWIRE in all cases. It is worth noting that Fd-SDR computes the kernel matrix only once outside the iterative process, while the other two methods require the kernel matrix to be recalculated in each iteration.

\begin{table}[ht]
\centering
\renewcommand{\arraystretch}{1.2} 
\begin{tabular}{c|l|ccc}
\toprule
                    & Model               & Fd-SDR              & FOPG                & GWIRE               \\ \midrule
\multirow{8}{*}{Scenario I}   & (1-a)                      & 0.07 (0.01)                & 2.52 (0.17)              & 3.37 (2.50)               \\
                              & (1-b)                      & 0.06 (0.01)                & 2.62 (0.12)              & 12.24 (0.54)              \\
                              & (2-a)                      & 0.03 (0.01)                & 0.19 (0.02)              & 2.17 (1.31)               \\
                              & (2-b)                      & 0.17 (0.02)                & 2.66 (0.10)              & 12.46 (0.36)              \\
                              & (3-c), $\alpha=0.2$        & 0.38 (0.06)                & 2.63 (0.11)              & 12.47 (0.46)              \\
                              & (3-c), $\alpha=0.4$        & 0.95 (0.27)                & 2.66 (0.09)              & 12.61 (0.59)              \\
                              & (4-c), $\alpha=0.2$        & 0.25 (0.03)                & 2.65 (0.12)              & 12.63 (0.39)              \\
                              & (4-c), $\alpha=0.4$        & 0.38 (0.08)                & 2.61 (0.10)              & 12.62 (1.70)              \\
                              \midrule
\multirow{4}{*}{Scenario II}  & (1-a)                      & 0.16 (0.12)                & 2.76 (0.10)              & 5.42 (2.37)               \\
                              & (1-b)                      & 0.09 (0.02)                & 2.84 (0.14)              & 15.82 (0.88)              \\
                              & (2-a)                      & 0.29 (0.38)                & 2.88 (0.13)              & 6.17 (3.14)               \\
                              & (2-b)                      & 0.14 (0.04)                & 2.79 (0.09)              & 15.46 (0.47)              \\
                              \midrule
\multirow{6}{*}{Scenario III} & (1-a)                      & 0.06 (0.02)                & 2.56 (0.20)              & 3.38 (2.79)               \\
                              & (1-b)                      & 0.07 (0.08)                & 2.52 (0.06)              & 11.79 (0.38)              \\
                              & (2-a)                      & 0.54 (0.67)                & 2.53 (0.09)              & 3.02 (2.39)               \\
                              & (2-b)                      & 0.56 (0.54)                & 2.61 (0.07)              & 12.03 (0.33)              \\
                              & (3-a)                      & 0.46 (0.46)                & 2.63 (0.13)              & 3.48 (2.75)               \\
                              & (3-b)                      & 0.36 (0.22)                & 2.71 (0.22)              & 12.41 (0.59)    \\
                          \bottomrule
\end{tabular}
\caption{Running time comparison (mean and standard deviation (in parenthesis), seconds) across Scenarios I, II, and III with $(n,p)=(400,20)$.}
\label{tab:time_comparison}
\end{table}

\section{Real Data Applications}\label{sec:real_datasets}
In this section, we test the proposed Fd-SDR method on four real-world datasets: global human mortality data, Washington D.C. bike rental count data, gene expression data from eleven types of carcinoma, and breast cancer survival data. 

\subsection{Global Human Mortality Data}

In this section, we demonstrate our method by applying it to the global human mortality dataset, which explores what influences human lifespans. This dataset captures age-at-death distributions, also referred to as mortality distributions, which are influenced by various factors such as economic conditions, healthcare systems, and social and environmental determinants. Insights from this data are crucial for developing policies aimed at improving health outcomes and extending life expectancy.

The dataset was sourced from the UN World Population Prospects 2019 Database \$available at \href{https://population.un.org/wpp/Download}{https://population.un.org/wpp/Download}). Nine predictors were used in the experiments: 
\begin{itemize}
    \item Population Density: population per square kilometer.  
    \item Sex Ratio: number of males per 100 females in the population.
    \item Mean Childbearing Age: average age of mothers at childbirth.  
    \item Gross Domestic Product (GDP) per Capita
    \item Gross Value Added (GVA) by Agriculture: percentage of GVA contributed by agriculture, hunting, forestry, and fishing.  
    \item Consumer Price Index (CPI): relative to the base year 2010.
    \item Unemployment Rate.  
    \item Expenditure on Health: as a percentage of GDP. 
    \item Arable Land: percentage of total land area.  
These predictors were gathered from United Nations databases.
\end{itemize}
This dataset offers a comprehensive view of demographic metrics, particularly mortality patterns across countries and regions. By examining age-specific mortality distributions, this data helps identify trends and disparities, offering valuable insights for global health policy.

Following the settings described in \citep{zhang2023dimension}, we analyzed life tables for each country and age group. The tables record the number of deaths $d(x,n)$ aggregated every five years. These data were treated as histograms, with bin widths of five years. For 162 countries in 2015, we smoothed these histograms using the `frechet' R package to obtain smoothed probability density functions. We then computed Wasserstein distances between the smoothed distributions. Additionally, we used a Gaussian kernel $\kappa(y,y') = \exp(-\gamma d_W^2(y,y'))$, with details provided in Section~\ref{sec:gen_fsdr}. We estimate the Fr\'echet central subspace by Fd-SDR, FOPG and GWIRE with $d=3$ and subsequently fit the projected data, $\hat{\bbeta}^T \bX$, using linear, polynomial and Fr\'echet regression \citep{petersen2019Frechet}. The adjusted R-squared values are provided in Table~\ref{tab:Rsqmort}. The R-squared of Fd-SDR and GWIRE methods are close and both of them better than FOPG method (Table~\ref{tab:Rsqmort}). Figure \ref{fig:morbidity_three_images} (a) shows the mortality rate over ages versus the 1st sufficient predictor. (b) and (c) show that the mean and standard deviation of age at death of each country versus the 1st and 2nd sufficient predictors. 

\begin{table}[ht]
    \centering
    \renewcommand{\arraystretch}{1.3} 
    \setlength{\tabcolsep}{12pt} 
    \begin{tabular}{l|c|c|c}
        \hline
        \textbf{Metric} & \textbf{Fd-SDR} & \textbf{FOPG} & \textbf{GWIRE} \\ 
        \hline\hline
        Adjusted R-squared  & 0.542 & 0.415 & 0.561 \\ 
        Fréchet R-squared   & 0.544 & 0.386 & 0.562 \\ 
        Poly R-squared      & 0.929 & 0.932 & 0.933 \\ 
        \hline
    \end{tabular}
    \caption{Performance metrics for the global mortality data}
    \label{tab:Rsqmort}
\end{table}

\begin{figure}[ht]
    \centering
    \begin{subfigure}[b]{0.3\textwidth}
        \centering
        \includegraphics[width=\textwidth]{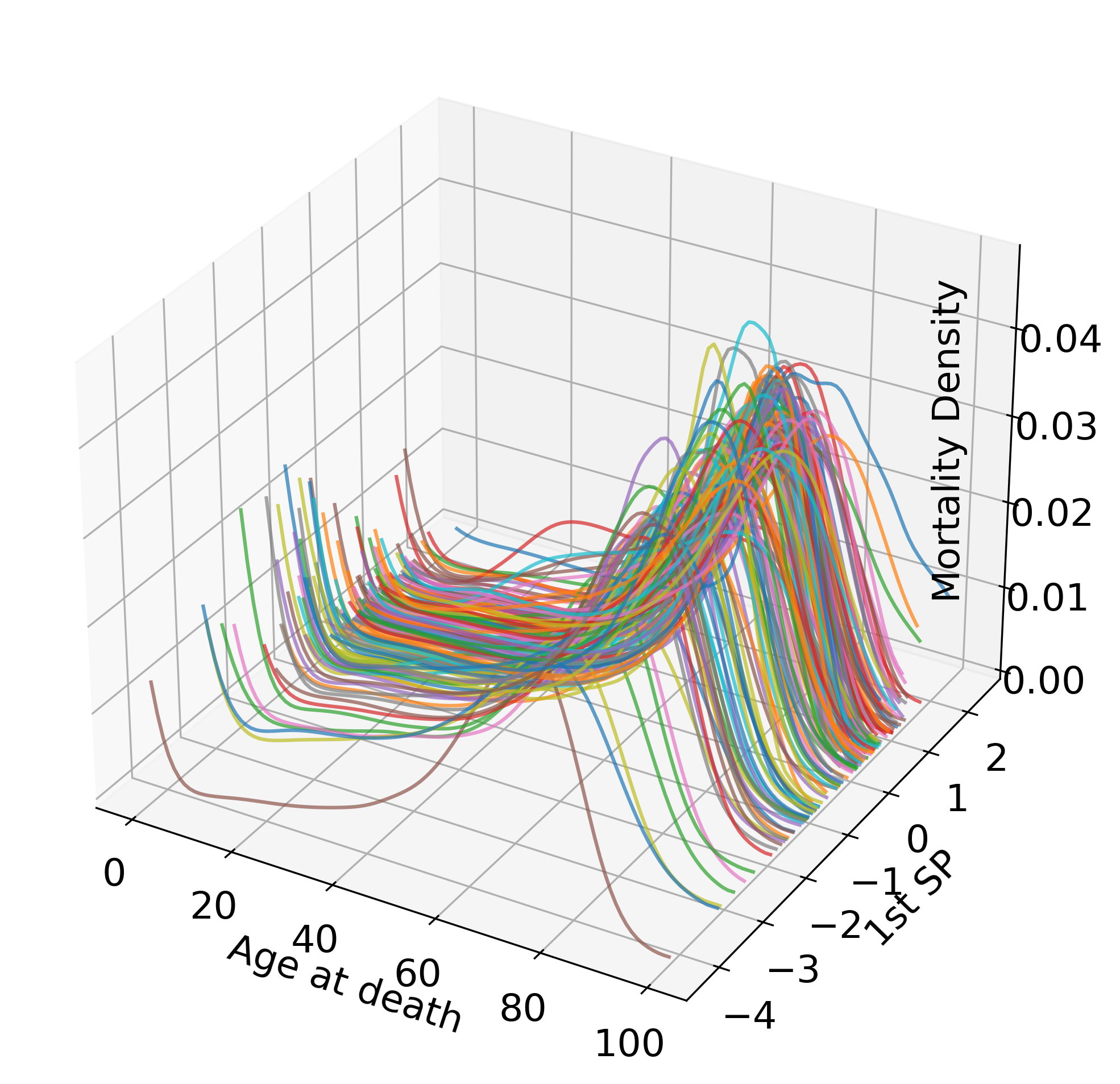}
        \caption{1st sufficient predictor}
           \end{subfigure}
    \hfill
    \begin{subfigure}[b]{0.33\textwidth}
        \centering
        \includegraphics[width=\textwidth]{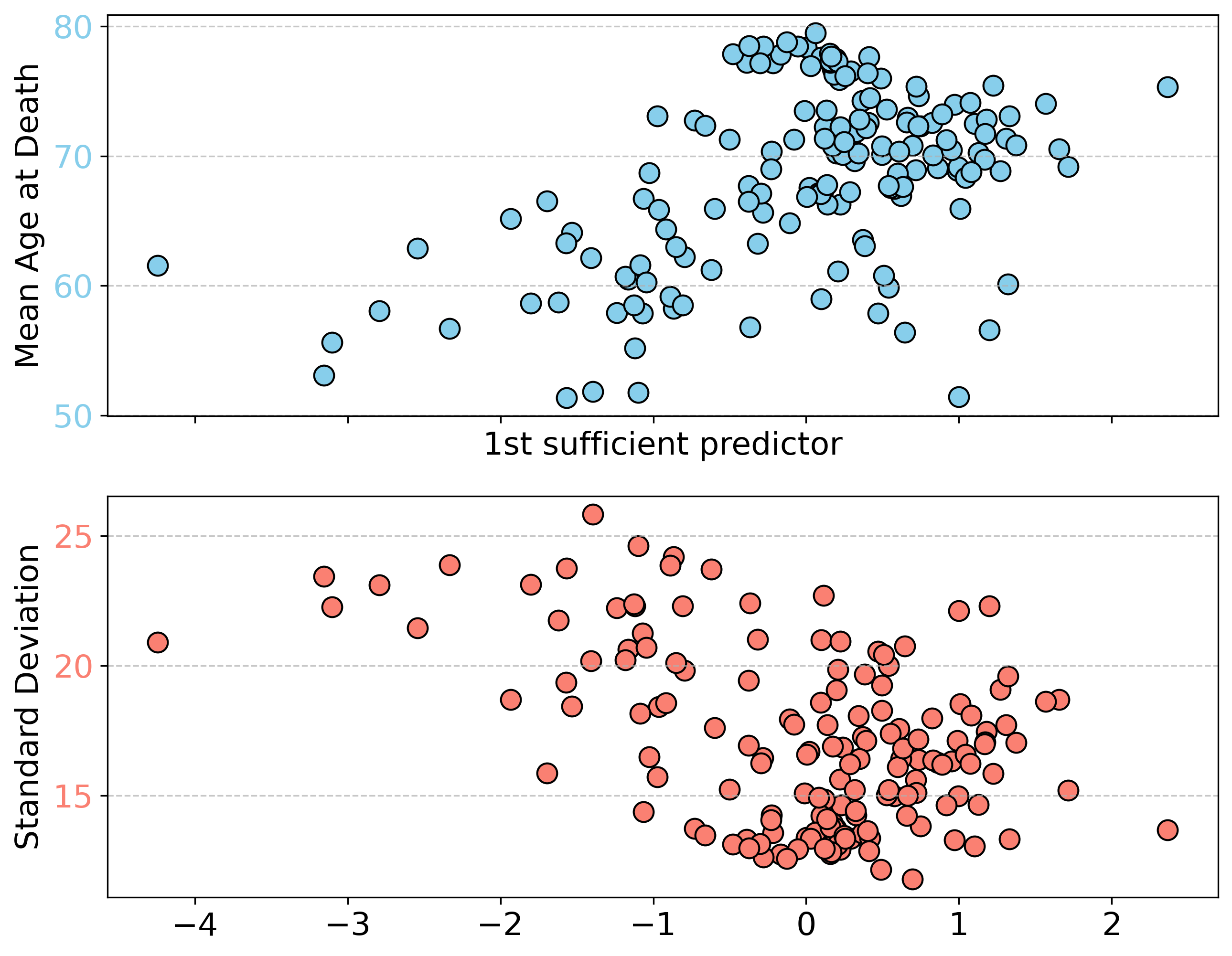}
        \caption{}
          \end{subfigure}
    \hfill
    \begin{subfigure}[b]{0.33\textwidth}
        \centering
        \includegraphics[width=\textwidth]{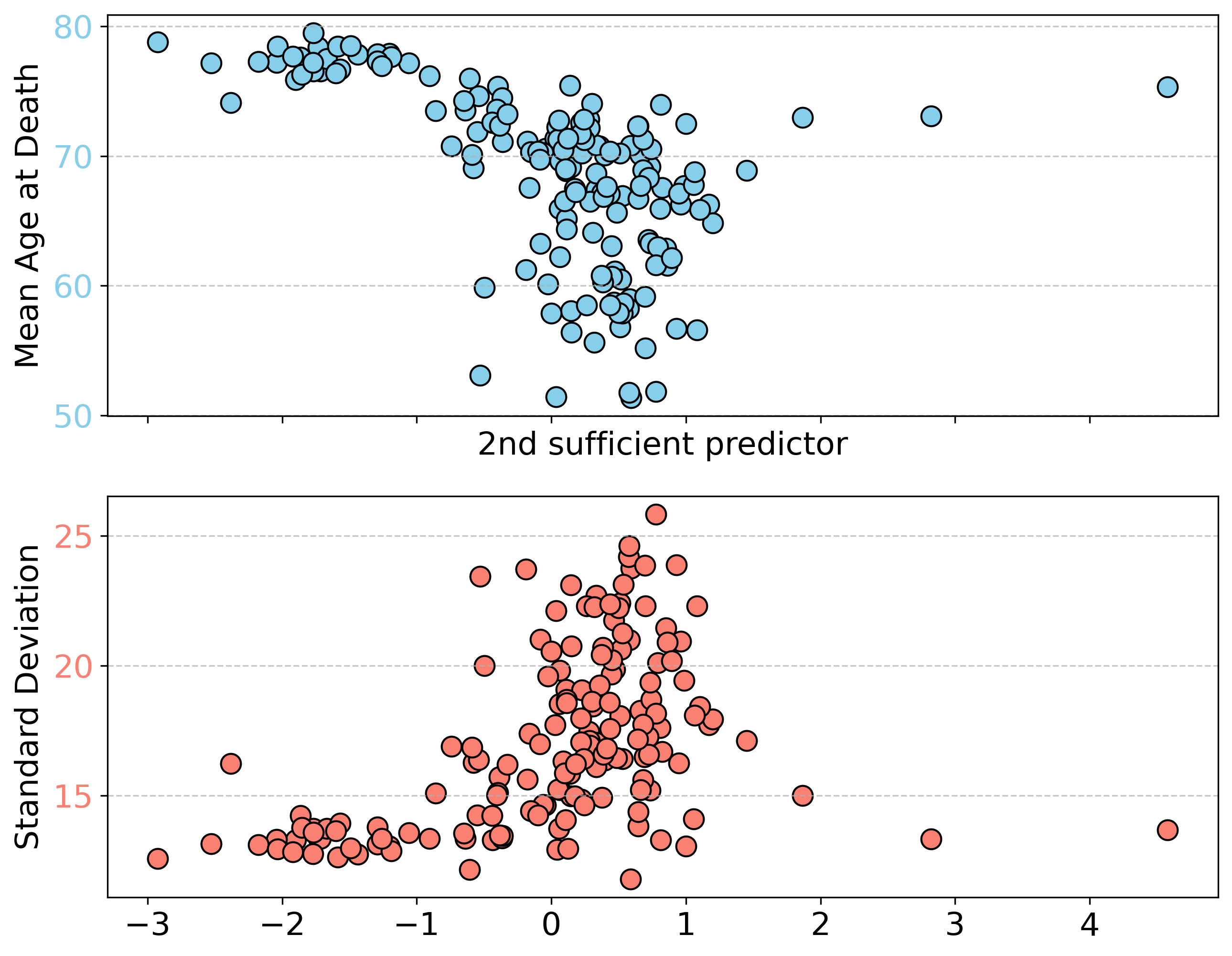}
        \caption{}
           \end{subfigure}
    \caption{(a) Global mortality rate by age (in years) plotted against the first sufficient predictor; (b) Mean mortality distribution versus the first two sufficient predictors; (c) Standard deviation of mortality distribution versus the first two sufficient predictors.}
    \label{fig:morbidity_three_images}
\end{figure}

\subsection{Bike Rental Data}
In this section, we apply our proposed method to bike rental data \citep{fanaee2014event}. The bike rental data from the UCI Machine Learning Repository is a widely-used dataset that captures bike-sharing usage patterns in Washington, D.C. The data spans two years (2011–2012) and includes hourly and daily counts of bike rentals, along with information on weather conditions, seasonal effects, holidays, and time-based features such as hour and weekday. This dataset is ideal for exploring temporal patterns in bike rentals, assessing the impact of external factors like weather and holidays on usage, and developing predictive models for demand forecasting.
We follow the settings from \citep{weng2023sparse} and choose 6 predictors similar to \citep{weng2023sparse}:  
1) Holiday: Indicator of a public holiday celebrated in Washington D.C.
2) Workingday: Indicator of neither the weekend nor a holiday
3) Temp: Daily mean temperature
4) Atemp: Feeling temperature
5) BW: Indicator of bad weather 
6) RBW: Indicator of really bad weather.

In this dataset, we set $d=2$ to estimate the Fr\'echet central subspace with the Gaussian kernel $\kappa(y,y') = \exp(-\gamma d_W^2(y,y'))$. We fit the estimated projected data, $\hat{\bbeta}^T \bX$, using linear, polynomial and Fr\'echet regression \citep{petersen2019Frechet}. The corresponding values of R squares are presented in Table~\ref{tab:Rsqbike} and the coefficients of the estimated subspace are given in Table~\ref{tab:bike_beta}. Among the three methods, Fd-SDR has the highest R-squared values in all regression models. 
In addition, we display the bike rental count data over time (by hour) against the first and second sufficient predictors in Figures~\ref{fig:Fd-SDR} and \ref{fig:maxdays}, respectively. Figure~\ref{fig:Fd-SDR} (a) illustrates all curves for both working and nonworking days, while Figures~\ref{fig:Fd-SDR}(b) and (c) separately depict the curves for working and non-working days, respectively. Based on the curve patterns shown in Figure~\ref{fig:Fd-SDR}, it is evident that bike rentals are common in the morning before 8 AM and in the afternoon around 6 PM. On non-working days, people predominantly rent bikes before 12 PM. We infer that on working days, most bike rentals are for commuting, whereas on non-working days, people complete their outdoor activities in the morning.

Furthermore, Table~\ref{tab:bike_beta} indicates that Fd-SDR identifies two types of factors influencing bike rentals. The first type consists of individuals who commute using rental bikes ($\hat{\bbeta_1}$ shows a high loading on the predictor ``Working'') and who consider the perceived temperature (``Atemp'' coefficient is larger than other predictors in magnitude). The second type comprises those who rent bikes mainly for entertainment, where temperature and weather are the primary factors (see coefficients for $\hat{\bbeta_2}$). The results obtained by Fd-SDR align with common observations.

 
\begin{table}[!ht]
    \centering
    \renewcommand{\arraystretch}{1.3} 
    \setlength{\tabcolsep}{12pt} 
    \begin{tabular}{l|c|c|c}
        \hline
        \textbf{Metric} & \textbf{Fd-SDR} & \textbf{FOPG} & \textbf{GWIRE} \\ 
        \hline\hline
        Adjusted R-squared & 0.519 & 0.361 & 0.504 \\ 
        Fréchet R-squared & 0.367 & 0.344 & 0.365 \\ 
        Polynomial R-squared & 0.783 & 0.704 & 0.769 \\ 
        \hline
    \end{tabular}
    \caption{Performance metrics for the bike rental data}
    \label{tab:Rsqbike}
\end{table}

\begin{table}[!ht]
    \centering
    \renewcommand{\arraystretch}{1.3} 
    \setlength{\tabcolsep}{12pt} 
    \begin{tabular}{l|c|c}
        \hline
        \textbf{Feature} & $\hat{\bbeta}_1$ & $\hat{\bbeta}_2$ \\
        \hline\hline
        Holiday & -0.01 & -0.07 \\
        Working & -0.99 & -0.15 \\
        Temp    & 0.05  & -0.79 \\
        Atemp   & -0.13 & 1.7 \\
        BW      & -0.02 & 0.16 \\
        RBW     & 0.03  & -0.2 \\
        \hline
    \end{tabular}
    \caption{Coefficients of the two directions $\hat{\bbeta}_1$ and $\hat{\bbeta}_2$ for the bike rental data, which are obtained by Fd-SDR.}
    \label{tab:bike_beta}
\end{table}

\begin{figure}[ht]
    \centering
    \begin{subfigure}[b]{0.32\textwidth}
        \centering
        \includegraphics[width=\textwidth]{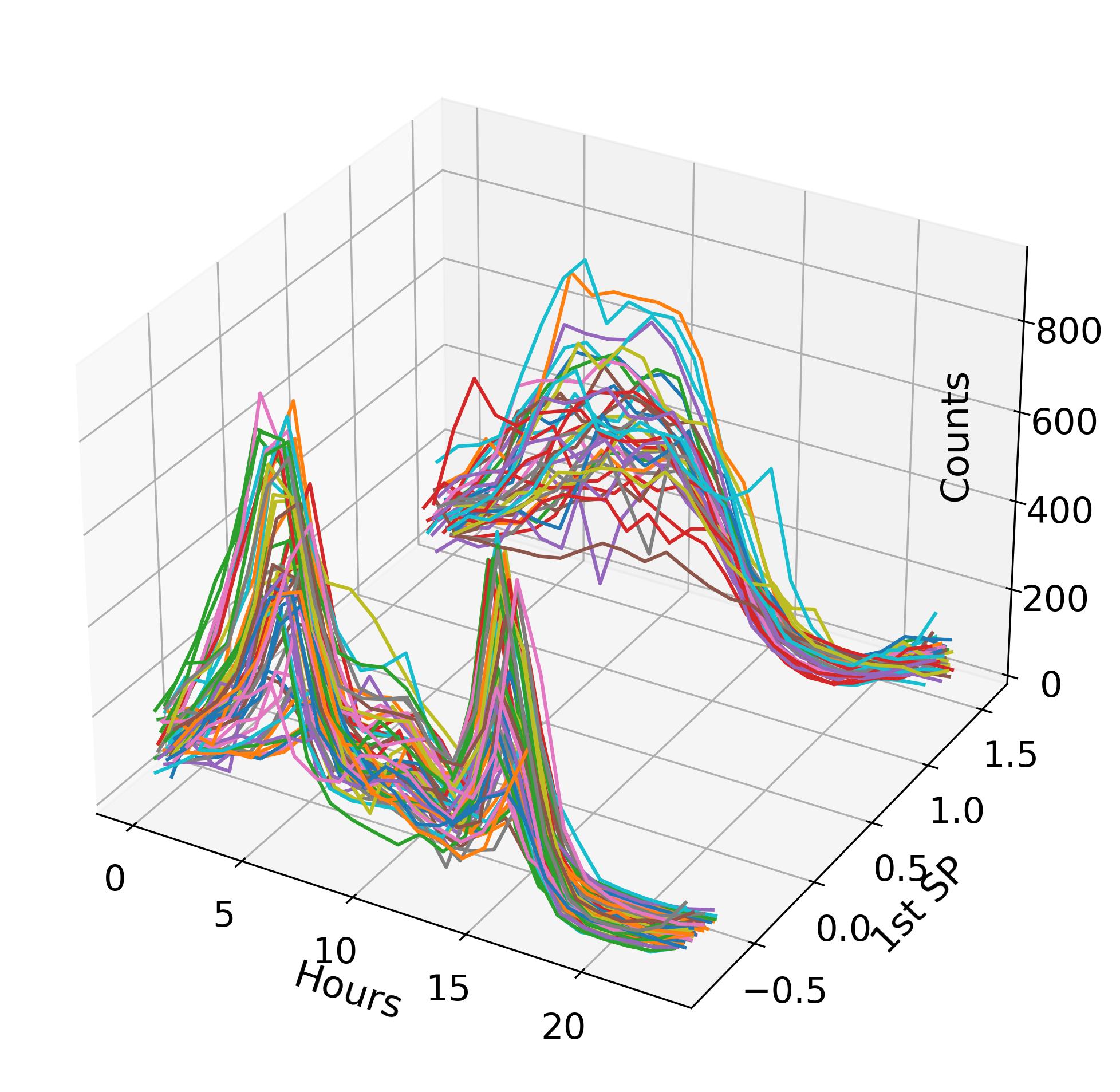}
        \caption{First sufficient predictor: working days (front) and non-working days (back).}
          \end{subfigure}
    \begin{subfigure}[b]{0.32\textwidth}
        \centering
        \includegraphics[width=\textwidth]{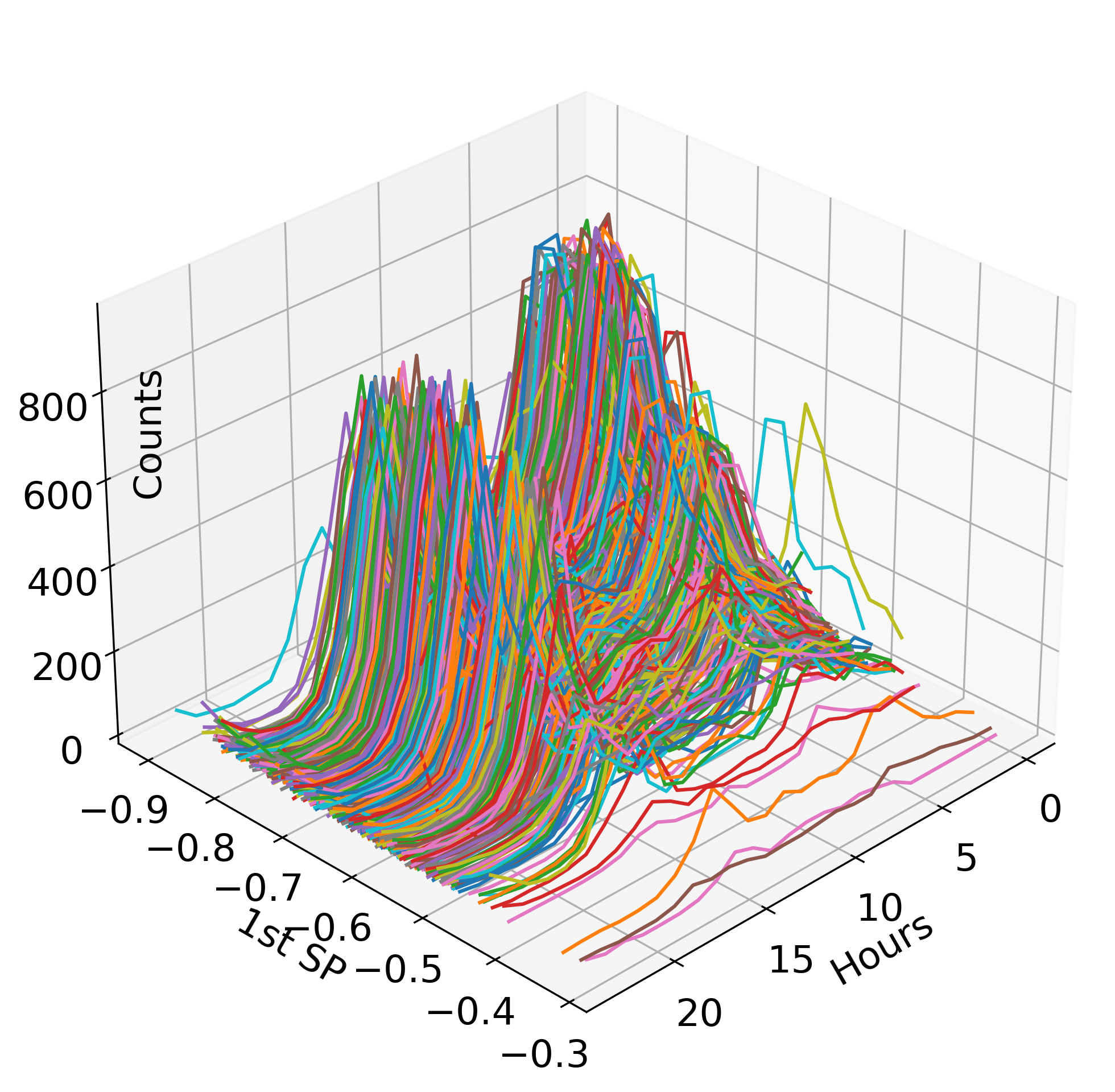}
        \caption{First sufficient predictor for working days.}
          \end{subfigure}
    \begin{subfigure}[b]{0.32\textwidth}
        \centering
        \includegraphics[width=\textwidth]{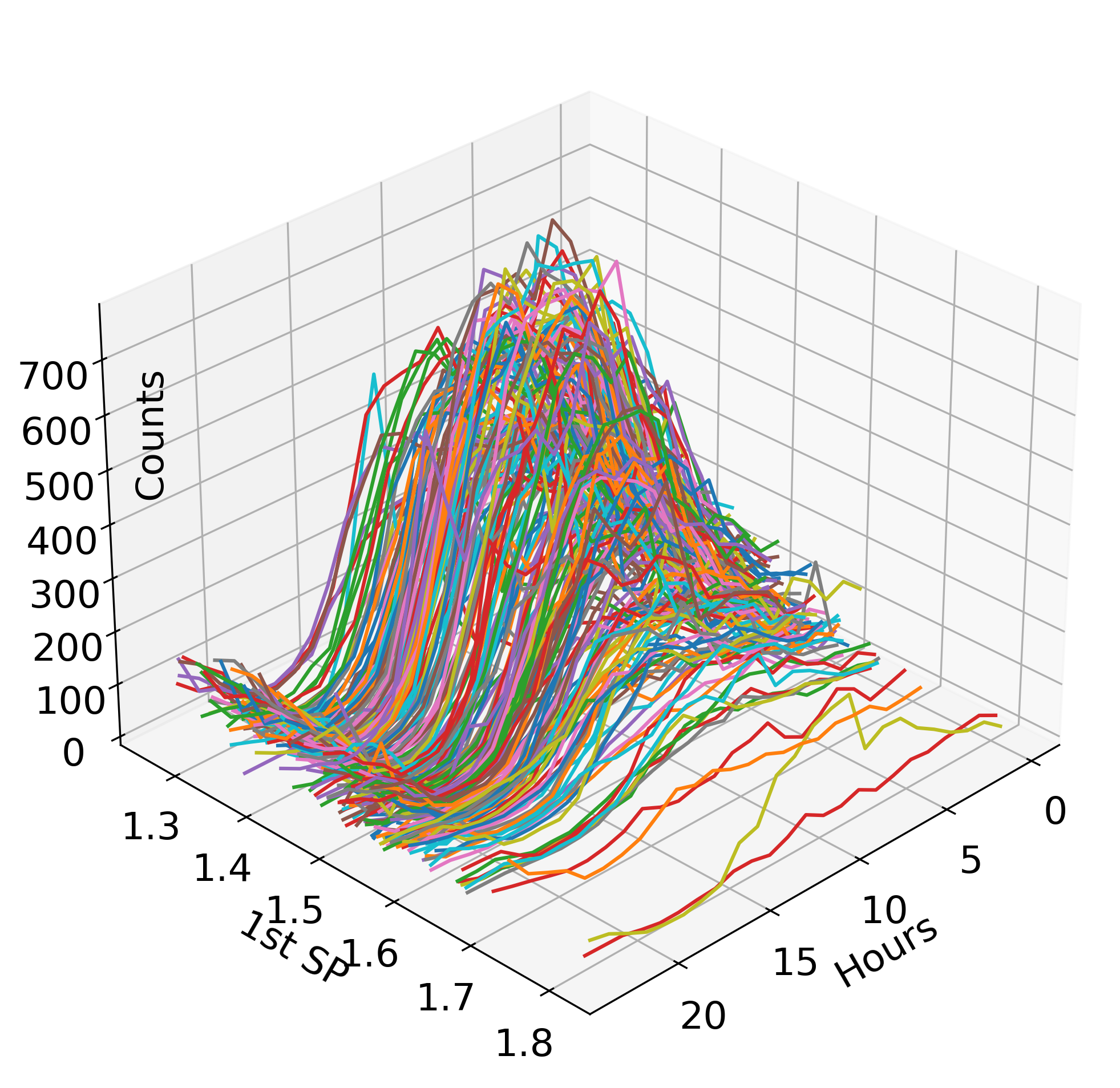}
        \caption{First sufficient predictor for non-working days.}
          \end{subfigure}
    \caption{The bike rental count data over time (by hour) projected onto the first sufficient predictor in (a) and onto the first sufficient predictors in (b) and (c).}
    \label{fig:Fd-SDR}
\end{figure}

\begin{figure}[ht]
    \centering
    \includegraphics[width=0.45\textwidth]{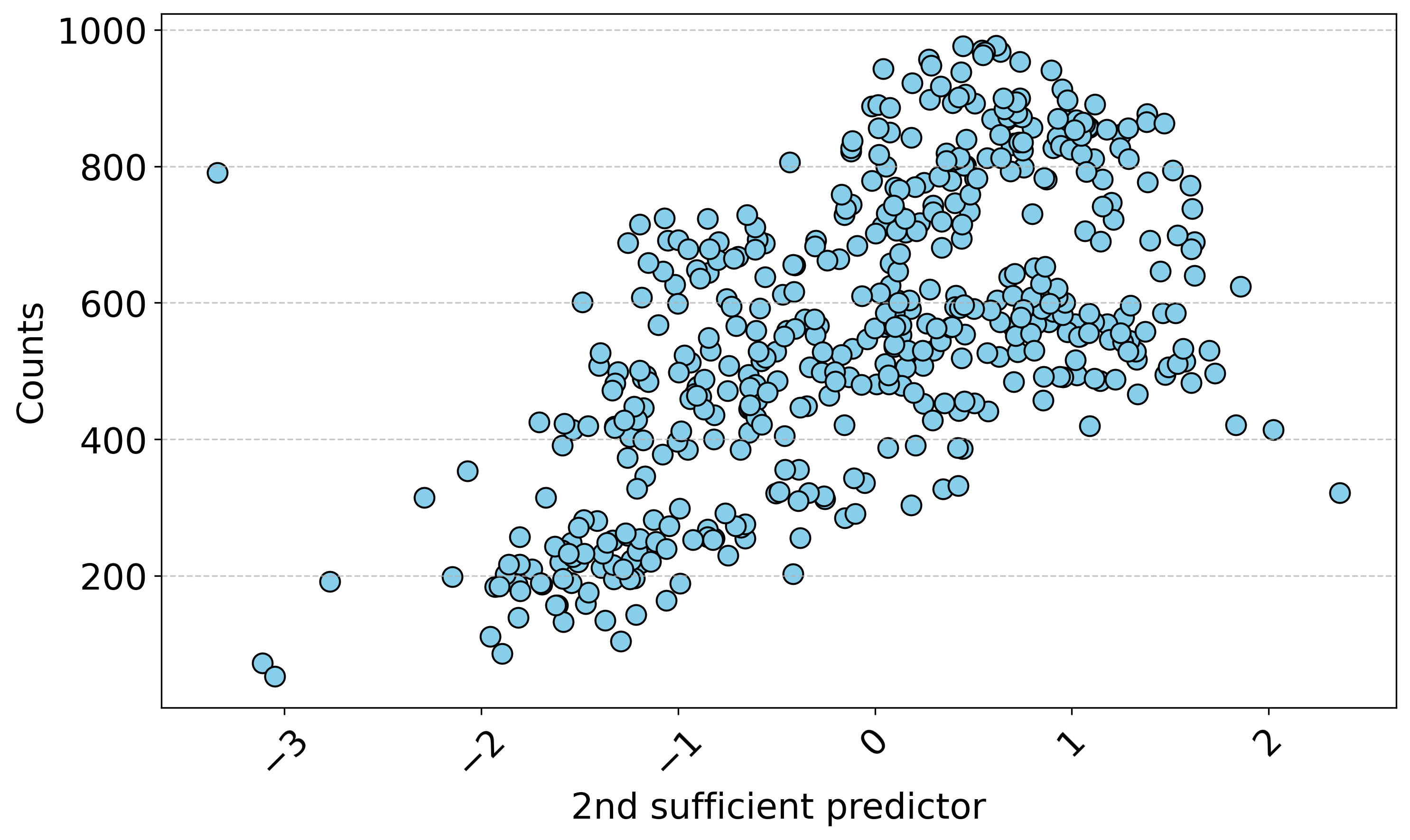}
    \includegraphics[width=0.45\textwidth]{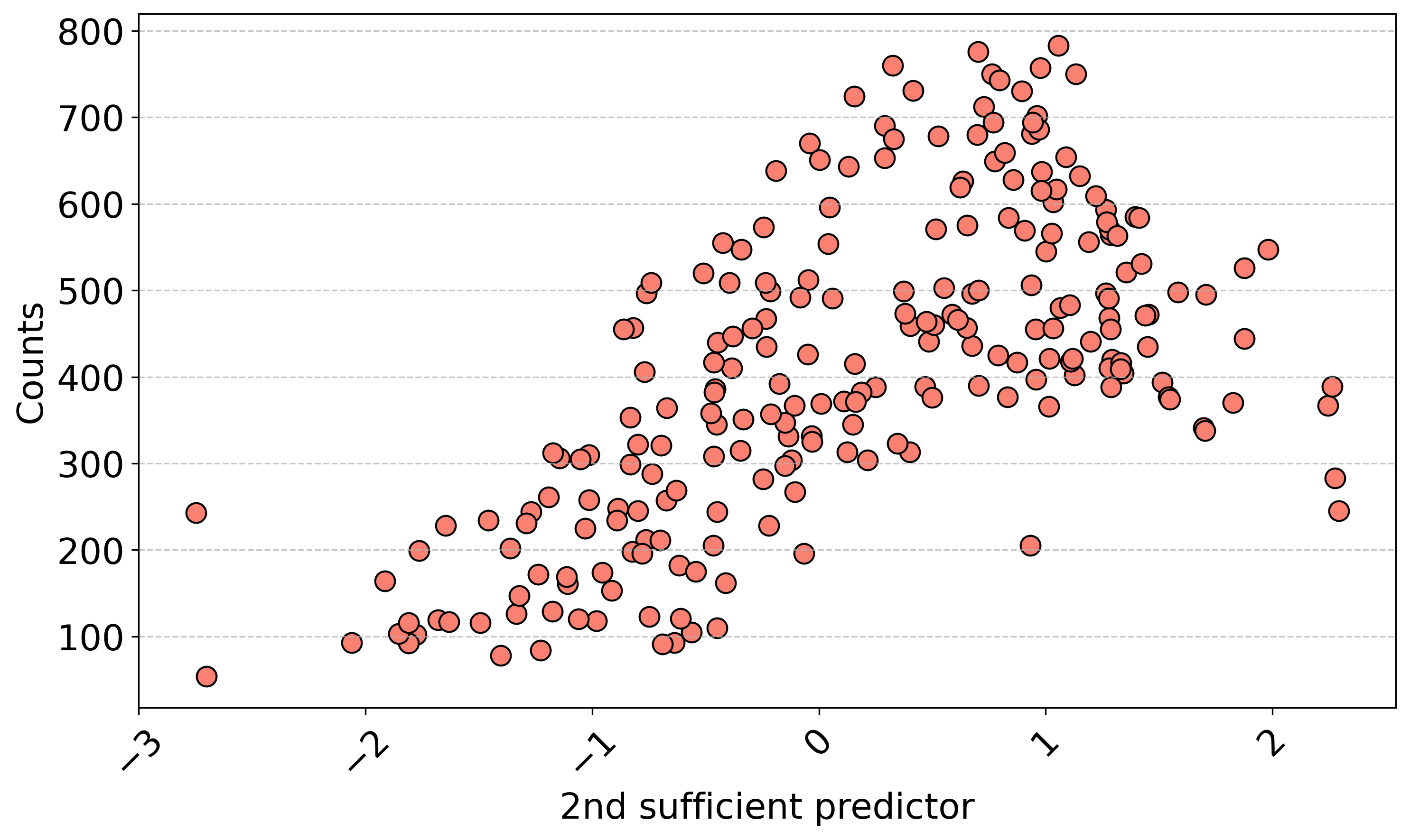}
    \caption{Plots of the maximum counts of bike rental distributions versus the second sufficient predictor on working days (left) and non-working days (right).}
    \label{fig:maxdays}
\end{figure}

\subsection{Ultrahigh-Dimensional Analysis: Carcinomas Data} \label{applications3}

Carcinoma, a type of cancer originating in epithelial cells that form the skin or line internal organs, requires precise classification based on primary anatomical sites (e.g., prostate, liver) to guide optimal treatment strategies \citep{su2001molecular}. This study examines the carcinomas dataset (U95a GeneChip) from \citep{su2001molecular}, comprising $n = 174$ samples spanning 11 carcinoma types: prostate, bladder/ureter, breast, colorectal, gastroesophageal, kidney, liver, ovary, pancreas, lung adenocarcinoma, and lung squamous cell carcinoma. For clarity, we denote these types as classes 0 through 10 in our experiments. The respective sample sizes are 26, 8, 26, 23, 12, 11, 7, 27, 6, 14, and 14. Collectively, these carcinoma types account for approximately 70\% of cancer-related deaths in the United States \citep{su2001molecular}.
Each sample provides gene expression levels for $p = 9183$ predictors, which have been preprocessed as described by \citep{su2001molecular}. The primary goal of this analysis is to classify carcinoma types based on gene expression profiles and identify genes significantly associated with each carcinoma category.

We applied the proposed Fd-SDR method to this dataset, treating carcinoma types as the response variable. To reduce dimensionality, we employed the Ball Correlation Sure Independence Screening (BCor-SIS) method \citep{pan2019generic}, which reduced the number of predictors to $173$. Fd-SDR, FOPG, and GWIRE with a Gaussian kernel are then used to estimate the Fr\'echet central subspace with dimension $d=2$. Logistic regression was then performed using $\hat{\bbeta}^T\bX $ as the predictors and $\bY$ as the responses. Model performance was evaluated using the area under the curve (AUC) metric. The mean and standard deviation of the AUC scores, obtained through 5-fold cross-validation, are presented in Table~\ref{tab:AUCs_carcinom}. As shown, Fd-SDR achieved the highest AUC, outperforming both FOPG and GWIRE.

We also visualize the projected data $(\hat{\bbeta}^{\text{Fd-SDR}})^T\bX$ in Figure \ref{fig:Carcinoma_plot}. The plot reveals clear separation among most carcinoma types, with some overlap observed between colorectal, kidney, and ovarian carcinomas (classes 4, 6, and 8). This highlights the effectiveness of our proposed method. The observed overlaps are likely due to shared genomic mutations and similar metastatic patterns, as noted in \citep{kan2010diverse,kir2010clinicopathologic}.

\begin{table}[ht]
    \centering
    \renewcommand{\arraystretch}{1.3} 
    \setlength{\tabcolsep}{12pt} 
    \begin{tabular}{l|c|c|c}
        \hline
        \textbf{Method} & \textbf{Fd-SDR} & \textbf{FOPG} & \textbf{GWIRE} \\ 
        \hline\hline
        AUC & 0.796 (0.180) & 0.515 (0.155) & 0.714 (0.213) \\ 
        \hline
    \end{tabular}
    \caption{AUC Mean with standard deviation (in parentheses) for logistic regression.}
    \label{tab:AUCs_carcinom}
\end{table}

\begin{figure}[ht]
    \centering
    \includegraphics[width=0.8\textwidth]{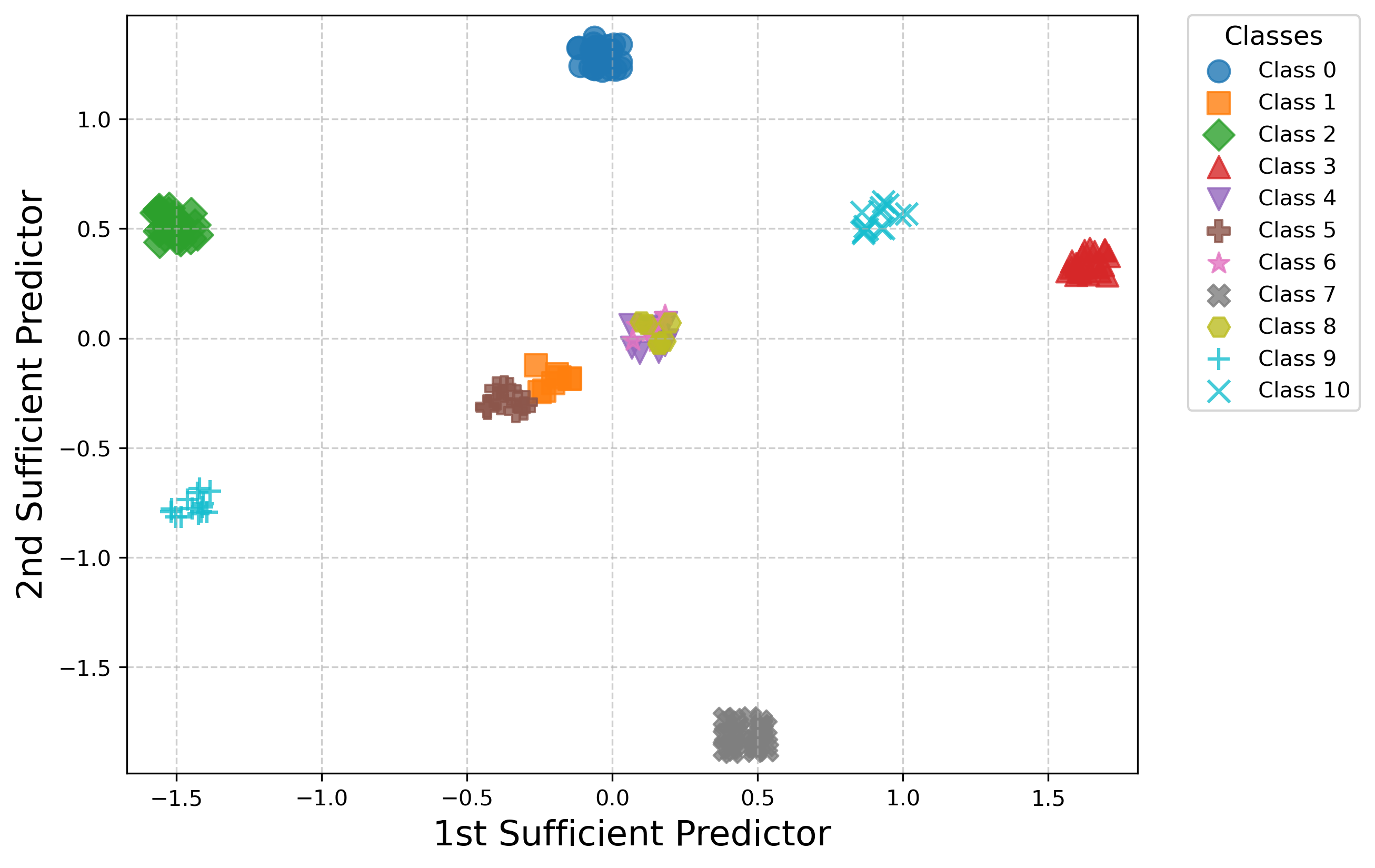} 
    \caption{Scatter plot of the first two sufficient predictors for the carcinoma data, with labels ranging from 0 to 10.}
    \label{fig:Carcinoma_plot}
\end{figure}

\subsection{Ultrahigh-Dimensional Analysis: TCGA Breast Cancer Survival Data}\label{applications4}

The last experiment to test the proposed method is the Breast Cancer dataset from The Cancer Genome Atlas (TCGA), utilizing three sources of data: gene expression RNAseq (IlluminaHiSeq percentile), curated survival endpoints, and clinical data obtained from the Xena platform \citep{goldman2020visualizing}. The goal of this experiment to investigate the relationship between overall survival time and gene expressions.

The gene expression dataset consists of $1,218$ samples and $20,531$ genes. The curated survival data includes four types of survival endpoints for each TCGA cancer type: Overall Survival (OS), Progression-Free Interval (PFI), Disease-Free Interval (DFI), and Disease-Specific Survival (DSS). Additionally, the clinical dataset provides information on each patient's age at the time of initial pathological diagnosis.

We divided all samples into 34 age-based groups and calculated the survival time distributions for each group. The average gene expression within each group was used as predictors. Since the dimensionality is much larger than the number of samples, we applied the BCor-SIS method \citep{pan2019generic} to reduce the dimensionality. We then applied the Fd-SDR, FOPG, and GWIRE methods to estimate the Fr\'echet central subspace with a dimension of $ d = 3 $. The estimated projected data, $\hat{\bbeta}^T\bX$, was fitted using linear, polynomial and Fr\'echet regression. The R-squared values are presented in Table \ref{tab:Rsqbc}, demonstrating that Fd-SDR has the highest values and outperforms the other two methods in both linear and Fr\'echet regression models. Since all three estimators achieve exceptional performance with R-squared values of $0.99$ for polynomial regression, these results are omitted from Table \ref{tab:Rsqbc}. 

\begin{table}[ht]
    \centering
    \renewcommand{\arraystretch}{1.3} 
    \setlength{\tabcolsep}{12pt} 
    \begin{tabular}{l|c|c|c}
        \hline
        \textbf{Metric} & \textbf{Fd-SDR} & \textbf{FOPG} & \textbf{GWIRE} \\ 
        \hline\hline
        Adjusted R-squared  & 0.661 & 0.645 & 0.648 \\ 
        Fréchet R-squared   & 0.702 & 0.688 & 0.691 \\ 
        \hline
    \end{tabular}
    \caption{R-squared metrics for the breast cancer survival data.}
    \label{tab:Rsqbc}
\end{table}

    

\section{Discussion}

In classical regression, SDR is a powerful tool for exploratory data analysis, regression diagnostics, and mitigating the curse of dimensionality. By addressing challenges such as collinearity among predictors, heteroscedasticity in the response, and identifying the most critical linear combinations of predictors, SDR facilitates more efficient and interpretable data analysis. By examining scatter plots of the response against the first two sufficient predictors derived via SDR, one can visualize and explore the general shape of the regression surface without resorting to complex models. These visualizations are especially valuable for uncovering and interpreting relationships in high-dimensional settings.

In our proposed method, the projection of mortality density onto the first sufficient predictor within the CMS effectively captures key patterns across various applications, including mortality distributions between countries, bike rental count distributions, carcinoma clusters in two-dimensional projection plots, and breast cancer survival data. For example, in the case of mortality data, the lower end of the sufficient predictor reveals a shift toward higher longevity, while the upper end highlights lower longevity with a marked increase near age 0, reflecting infant mortality. These results highlight the capability of CMS to condense complex, multivariate relationships into an interpretable low-dimensional representation.

A key innovation of our approach is the integration of kernel distance covariance into the SDR framework, which enhances the detection of complex and non-linear functional relationships. Traditional SDR methods often emphasize linear dependencies, which can limit their ability to uncover more intricate associations. By incorporating kernel distance covariance, our method expands the scope of SDR, allowing it to capture a broader range of associations and achieve more flexible and precise dimension reduction. This advancement broadens the applicability of SDR to accommodate response variables valued in metric spaces, with the CMS framework providing a solid foundation for identifying sufficient predictors. Furthermore, our methodology applies to any separable and complete metric space of negative type, greatly expanding the potential applications of SDR and offering a robust framework for Fr\'echet regression in diverse data settings.

In conclusion, the integration of kernel distance covariance within the CMS framework greatly enhances the applicability and versatility of SDR for analyzing metric-space-valued data. This advancement facilitates the discovery of complex relationships in high-dimensional settings, representing a significant step forward in the theory and practice of sufficient dimension reduction.

\section*{Funding}
We gratefully acknowledge the support of the \textit{National Science Foundation} through grants (DMS-1924792, DMS-2318925 and CNS-1818500).
\bibliography{FdSDR}

\end{document}